\newcommand\restartchapters{\par
  \setcounter{chapter}{0}%
  \setcounter{section}{0}%
  \gdef\@chapapp{\chaptername}%
  \gdef\thechapter{\@arabic\c@chapter}}
\newtheorem{remark}{{\bf{Remark}}}
\newtheorem{lemma}{\bf {Lemma}}
\newtheorem{proposition}{{\bf Proposition}}
\newcommand{\col}{\mathtt{col}}
\renewcommand{\algorithmicrequire}{\textbf{Input:}}
\newcommand{\ds}{\displaystyle}
\g@addto@macro\normalsize{%
	\setlength\abovedisplayskip{1pt}
	\setlength\belowdisplayskip{1pt}
	\setlength\abovedisplayshortskip{1.5pt}
	\setlength\belowdisplayshortskip{1.5pt}
}
\newcommand{\subparagraph}{}
\begin{document}
\bstctlcite{IEEEexample:BSTcontrol}

\title{{ Learning-Assisted User Clustering in Cell-Free Massive MIMO-NOMA Networks}}
\author{
	\IEEEauthorblockN{Quang Nhat Le, Van-Dinh Nguyen, Nam-Phong Nguyen, Symeon Chatzinotas, Octavia~A.~Dobre, and Ruiqin Zhao \vspace{-30pt}}
	\\
\thanks{Q. N. Le and O. A. Dobre are with the Dept. of Electrical and Computer Engineering, Memorial University, St. John’s, NL A1B 3X9, Canada (e-mail: \{qnle, odobre\}@mun.ca).}
\thanks{V.-D. Nguyen and S. Chatzinotas are with the Interdisciplinary Centre for Security, Reliability, and Trust (SnT) – University of Luxembourg, L-1855, Luxembourg (e-mail: \{dinh.nguyen, symeon.chatzinotas\}@uni.lu).}
\thanks{N-P. Nguyen is with the School of Electronics and Telecommunications, Hanoi University of Science and Technology, Hanoi, Vietnam (e-mail: phong.nguyennam@hust.edu.vn).}
\thanks{R. Zhao is with the School of Marine Science and Technology, Northwestern Polytechnical University, Xi’an 710072, China (e-mail: rqzhao@nwpu.edu.cn).}
	}
%	\thanks{This work was supported }

\maketitle
%\thispagestyle{empty}
%\pagestyle{empty}
%\vspace*{-40pt}
\begin{abstract}
The superior spectral efficiency (SE) and user fairness feature of non-orthogonal multiple access (NOMA) systems are achieved by exploiting user clustering (UC) more efficiently. However, a random UC certainly results in a suboptimal solution while an exhaustive search method comes at the cost of  high complexity, especially for systems of medium-to-large size. To address this problem, we develop two efficient unsupervised machine learning (ML) based UC algorithms, namely k-means++ and improved k-means++, to effectively cluster users into disjoint clusters in cell-free massive multiple-input multiple-output (CFmMIMO) system. Using full-pilot zero-forcing at access points, we derive the sum SE  in closed-form expression taking into account the impact of intra-cluster pilot contamination, inter-cluster interference, and imperfect successive interference cancellation. To comprehensively assess the system performance, we formulate the sum SE optimization problem, and then develop a simple yet efficient iterative algorithm for its solution. In addition, the performance of collocated massive MIMO-NOMA (COmMIMO-NOMA) system is also characterized. Numerical results are provided to show the superior performance of the proposed UC algorithms compared to other baseline schemes. The effectiveness of applying NOMA in CFmMIMO and COmMIMO systems is also validated.  

\end{abstract}
\begin{IEEEkeywords}
Cell-free massive multiple-input multiple-output, full-pilot zero-forcing, k-means, machine learning, non-orthogonal multiple access, power allocation, user clustering. 
\end{IEEEkeywords}

\newpage
\section{Introduction} \label{Introduction} 
The tremendous growth in the number of emerging applications will certainly pose enormous  traffic demands with ultra-high connection density for next-generation wireless networks. It is approximated that more than 19 billion devices are connected to the Internet in 2019, and this number is predicted to exceed 22 billion devices by 2021 \cite{Iotanalytics2016}. The global data traffic of mobile devices is expected to reach 49 exabytes per month by 2021 \cite{Cisco2017_CVNI}, and  will further increase over the next decade. However, traditional orthogonal multiple-access (OMA) techniques seem to reach their fundamental limits in the near future, and therefore are no longer suitable to meet these requirements. Consequently, it calls for innovative techniques that utilize radio resources more efficiently to attain the optimal performance.

Non-orthogonal multiple-access (NOMA) has been envisaged as a key  enabling technology that significantly enhances spectral efficiency (SE) and user fairness of traditional wireless communication systems \cite{Islam:COMSurTutor:2017}. In NOMA, multiple user equipments (UEs) are allowed to simultaneously transmit and receive their signals in the same resources such as time/frequency/code domain by using different signal signatures (i.e., code-domain NOMA) or power levels (i.e., power-domain NOMA) \cite{Islam:COMSurTutor:2017,Dinh:JSAC:Dec2017,HieuTCOM2019}.\footnote{This paper will focus on  power-domain NOMA, which is simply referred to as NOMA for short.} In particular, in a downlink system the key benefit of  NOMA is attributed to the fact that UEs with better channel conditions are able to cancel interference caused by UEs with poorer channel conditions  using successive interference cancellation (SIC) technique. User fairness is then achieved by allocating a large portion of the total power budget to weak UEs, which also guarantees the SIC's feasibility at strong UEs.

Recently, cell-free massive multiple-input multiple-output (CFmMIMO), which is a scalable version of massive MIMO networks, has been introduced to overcome the large propagation losses as well as provide better quality-of-experience services for cell-edge UEs \cite{Ngo:TWC:Mar2017,Nayebi:IEEETWC:Jul2017,Bashar:IEEETWC:Apr2019}. CFmMIMO comprises of a large number of access points (APs) that are spatially distributed over a wide area to coherently  serve multiple UEs in the same time-frequency resources. All APs are coordinated by a central processing unit (CPU) through  fronthaul links. Each AP performs beamforming based on its local channel state information (CSI) only, and this feature thus greatly reduces the complexity in terms of the fronthaul overhead. Since each UE is coherently served by all APs, the effect of cell boundaries can be effectively removed. It was shown in \cite{Ngo:TWC:Mar2017} and \cite{Ngo:IEEETGCN:Mar2018} that  CFmMIMO is superior to small-cell  and  collocated massive MIMO (COmMIMO) in terms of SE and energy efficiency (EE), respectively. However, the key advantages of favorable propagation and  channel hardening properties to multiplex numerous UEs are only achieved in the case of multiple antennas at APs and/or low propagation losses \cite{ChenTWC2018}. From the aforementioned reasons, it is of pivotal interest to study the combination of NOMA and CFmMIMO to reap all their benefits, towards fulfilling the conflicting demands on high SE, massive connectivity with low latency, and high reliability with user fairness of future wireless networks \cite{chen2020massive}.

\subsection{Related Work}
Despite its potential,  there are only a few research works investigating the benefit of NOMA in CFmMIMO systems in the literature. NOMA for the downlink CFmMIMO system was first studied in \cite{LiWCL2018}, where closed-form expression for the achievable sum rate was derived. Numerical results showed the superior performance of NOMA compared to OMA. The authors in \cite{ZhangICC2018} investigated the impact of NOMA  in the uplink CFmMIMO system and derived the closed-form approximation for the sum SE (SSE). Simulation results demonstrated that the CFmMIMO-NOMA system is capable of utilizing the scarce spectrum more efficiently. In \cite{RezaeiTWC2020}, different types of precoding techniques such as maximum ratio transmission (MRT), full-pilot zero-forcing (fpZF), and modified regularized ZF (mRZF) at APs were considered in  downlink CFmMIMO-NOMA systems. It was shown that the downlink CFmMIMO-NOMA system with mRZF and fpZF precoders significantly outperform the OMA with MRT in terms of the achievable sum rate. These  existing works mainly focused on characterizing the performance analysis in CFmMIMO-NOMA systems, but did not show how UEs are paired/grouped.

To be spectrally-efficient, it is crucial to group a sufficiently large number of UEs with distinct channel conditions that performs NOMA jointly \cite{Islam:COMSurTutor:2017,Dinh:JSAC:Dec2017,HieuTCOM2019,Islam:IEEEWirelessComm:Apr2018}. In the context of CFmMIMO-NOMA,  Bashar \textit{et al.} \cite{BasharTCOM2020} proposed three  distance-based  pairing schemes including near pairing, far pairing, and random pairing to group UEs into disjoint clusters. It  is not surprising to see that the close pairing, where two UEs with the smallest distance between them are paired, provides worst performance, which is also aligned with the NOMA principle \cite{Islam:COMSurTutor:2017,Dinh:JSAC:Dec2017}. Another interesting study is to group a large number of UEs into one cluster \cite {RezaeiCL2020}, referred to as user clustering (UC), in which a low complexity suboptimal  method based on the Jaccard distance coefficient was developed to find the most dissimilar UEs in the CFmMIMO-NOMA system. Nevertheless, UC algorithms  in the above-cited works were developed based on  distances among UEs only, while the learning features are missing, resulting in a suboptimal solution.

Recently, unsupervised machine learning (ML) techniques have been considered as an effective means for different optimization targets, which exploit  adaptive learning features. In this regard, the authors in \cite{HeTWC2017} proposed a kernel-power-density based algorithm to cluster  multipath components of MIMO channels into disjoint groups. A novel cluster-based geometrical dynamic stochastic model was proposed in  \cite{XieICC2017},   where scattered nodes were grouped into different clusters according to the density of nodes in MIMO scenarios. In \cite{WangACCESS2019}, a  clustered sparse Bayesian learning algorithm was developed for channel estimation in a hybrid analog-digital massive MIMO system by using the sparsity characteristic of angular domain channel. Notably, the authors in \cite{RozarioICoTCT2018} proposed a novel clustering scheme for machine-to-machine communications in a time-division multiple access-based NOMA system in order to increase the battery lifetime of machines, using the popular k-means algorithm \cite{i14}. This work was extended in \cite{CabreraITNAC2018}  to improve the network sum throughput by considering an enhanced k-means algorithm accompanied NOMA.  Further, the k-means algorithm was used to cluster UEs in  mmwave-NOMA \cite{CuiTWC2018} and CFmMIMO \cite{PalouGC2018}. Although these works demonstrated the effectiveness of applying unsupervised ML in  clustering tasks for various wireless communication systems, its application for UC in CFmMIMO-NOMA has not been previously studied.

\subsection{Motivation and Main Contributions}
In CFmMIMO-NOMA systems, the effects of network interference are increasingly abnormal and acute as the number of APs becomes denser. Most existing works on CFmMIMO-NOMA systems \cite{LiWCL2018,ZhangICC2018,RezaeiTWC2020} focus on the performance analysis while 
they neglect the importance of UC, which has been shown to significantly improve the performance of NOMA-based systems \cite{Islam:COMSurTutor:2017,Dinh:JSAC:Dec2017,CespedesTCOM2020}. A direct application of random UC schemes \cite{Dinh:JSAC:Dec2017,Islam:IEEEWirelessComm:Apr2018} to CFmMIMO-NOMA systems would result in  poor performance, even worse than traditional linear beamforming without NOMA. In addition, a joint UC and beamforming in \cite{HieuTCOM2019}, clustering UEs by means of the tensor model, is not very practical for CFmMIMO-NOMA due to excessively high complexity in terms of computational and signalling overhead.  Although the k-means algorithm has been widely adopted for different clustering tasks \cite{RozarioICoTCT2018,i14,CabreraITNAC2018,CuiTWC2018,PalouGC2018},  its main drawback is sensitivity to the initialization of  centroids. 

%
%As discussed above, effective UC and power allocation (PA) schemes become the key to realize the potential of NOMA in CFmMIMO system. Due to the combinatorial feature of UC and the non-convex property of PA in CFmMIMO-NOMA system, finding a globally optimal solution is nontrivial, especially in large-scale system, which motivates the application of ML. In this paper, we propose two unsupervised ML based UC algorithms and a fast converging iterative algorithm for PA that significantly improve the SSE of the CFmMIMO-NOMA system.

Taking into account all these issues, in this paper we devise  novel UC algorithms along with an efficient transmission strategy so that the SSE of CFmMIMO-NOMA systems is remarkably enhanced. In particular, our main contributions are summarized as follows:
\begin{itemize}
	\item We propose  two efficient unsupervised ML-based UC algorithms, including
	 k-means++ and improved k-means++, to effectively cluster UEs into disjoint clusters in CFmMIMO-NOMA. The proposed k-means++ algorithms further address  the limitation of k-means due to the randomness of  initial centroids.
	\item Adopting fpZF precoding at APs, we characterize the performance of the proposed CFmMIMO-NOMA system, considering  impacts of intra-cluster pilot contamination, inter-cluster interference, and imperfect SIC. To that end, the closed-form expression of  SSE  is derived. Furthermore, we also present the analytical result for COmMIMO-NOMA, which serves as a benchmark.
	\item To further improve the SSE, we formulate optimization problems for both CFmMIMO-NOMA and COmMIMO-NOMA systems by incorporating power constraints at APs and necessary conditions for implementing SIC at UEs, which belong 	to the difficult class of nonconvex optimization problem. Towards appealing applications, two low-complexity iterative algorithms based on inner approximation (IA) method \cite{Marks:78} are developed for their solutions, which are guaranteed to converge to at least a locally optimal solution.
	\item Extensive numerical results are provided to confirm the effectiveness of the proposed UC algorithms on the SSE performance over the current state-of-the-art approaches (e.g., close-, far- and random-pairing schemes \cite{BasharTCOM2020}, and Jaccard-based UC scheme  \cite{RezaeiCL2020}). They also show the significantly achieved SSE gains of CFmMIMO-NOMA over COmMIMO-NOMA. 
\end{itemize}  

\subsection{Paper Organization and Notations}

The remainder of this paper is organized as follows. Section \ref{sec:sys} describes the system model. In Section \ref{sec:cluster}, two unsupervised ML-based UC algorithms are presented. The performance analysis for CFmMIMO-NOMA is given in Section \ref{sec:PA}. The  proposed iterative algorithms for CFmMIMO-NOMA and COmMIMO-NOMA are provided in Sections  \ref{sec:MPA} and \ref{sec:co}, respectively. Numerical results are given in Section \ref{sec}, while Section \ref{sec:con} concludes the paper.
		
\textit{Notations}: Bold uppercase letters, bold lowercase letters, and lowercase characters stand for matrixes, vectors, and scalars, respectively. $|\cdot|$, $(\cdot)^H$, $(\cdot)^T$, $(\cdot)^*$, and $||\cdot||_2$ correspond to the cardinality, the Hermitian transpose, the transpose, the conjugate, and the $l$2$-$norm operators, respectively. $\mathbb{E}[\cdot]$ represents the expectation operation. $\mathcal{CN}(\mu,\sigma^2)$ stands for circularly symmetric complex Gaussian random variable (RV) with mean $\mu$ and variance $\sigma^2$.

\section{System Model} \label{sec:sys}
\subsection{System Description}

\begin{figure}[!h]
	\centering	\vspace{-10pt}
	\includegraphics[width=0.7\textwidth,trim={0cm 0.0cm 0cm 0.0cm}]{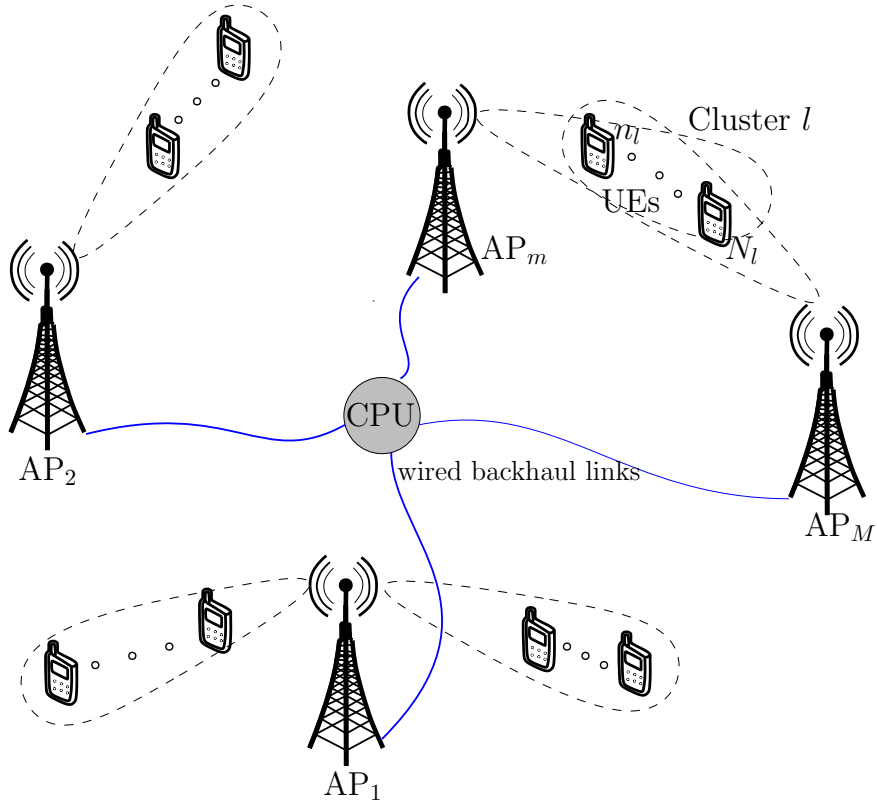}
		\vspace{-5pt}
	\caption{An illustration of the CFmMIMO-NOMA system.}
	\label{fig:sys1}
\end{figure}

We consider an CFmMIMO-NOMA system,  where the set $\mathcal{M}\triangleq\{1,2,\cdots,M\}$ of $M$ APs are connected to the  CPU  through perfect  wired backhaul links to serve the set $\mathcal{N}\triangleq\{1,2,\cdots, N\}$ of $N$ UEs  via a shared wireless medium, as shown in Fig.~\ref{fig:sys1}. Each AP is equipped with $K$  antennas, while each UE has a single antenna.  APs and UEs are assumed to be randomly distributed in a wide coverage area. The communication between APs and UEs follows the time division duplex (TDD) mode. Each coherence interval, denoted by $\tau_c$, includes two phases: uplink training $\tau_p$ ($\tau_p < \tau_c$) and downlink data transmission ($\tau_c - \tau_p$). The total $N$ UEs are grouped into $L$ clusters and each UE belongs to one cluster only. We denote the set of $L$ clusters  by $\mathcal{L}\triangleq\{1,2,\cdots L\}$. The set of UEs in the $l$-th cluster is defined as $\mathcal{N}_l\triangleq\{1_l,\cdots,n_l,\cdots,N_l\}$ with $|\mathcal{N}_l|=N_l$, where $\bigcup_{l\in\mathcal{L}}|\mathcal{N}_l|=N$ and $\mathcal{N}_l\bigcap \mathcal{N}_{l'}=\emptyset$ for $l\neq {l'}$. 

\subsection{Signal Model} 
\subsubsection{Uplink Training}
In the uplink training phase, all  UEs send their training pilots to  APs for channel estimation. Then,  downlink channels are achieved by leveraging the channel reciprocity property of the TDD mode. With the aim of minimizing the channel estimation overhead in CFmMIMO-NOMA,  UEs in the same cluster share the same pilot sequence, and the pilot sequences among different clusters are pairwisely orthogonal \cite{LiWCL2018,BasharTCOM2020} which requires $\tau_p \geq L$. In this paper, we assume that $\tau_p = L$. Let us denote the pilot sequence sent from the UEs in  the $l$-th  cluster by $\boldsymbol{\phi}_{l} \in \mathbb{C}^{\tau_p \times 1}$ with $l \in \{1,2,\ldots,\tau_p\}$, satisfying the orthogonality, i.e., $\|\boldsymbol{\phi}_{l}\|^2_2=\tau_p$ and $\boldsymbol{\phi}^H_l \boldsymbol{\phi}_{l'} = 0$ if $l \neq {l'}$. The channel vector from  UE $n_{l}$ to AP$_m$ is defined as $\textbf{h}_{m,n_{l}} \in \mathbb{C}^{K \times 1}$. In this paper, we focus on slowly time-varying channels, and assume that the channel coefficients are static during the $\tau_c$ interval. The channel $\textbf{h}_{m,n_{l}}$  is generally modeled as follows:
\begin{align}\label{eq:chan}
\textbf{h}_{m,n_{l}} = \sqrt{\beta_{m,n_{l}}} \bar{\textbf{h}}_{m,n_{l}},
\end{align}
where $\beta_{m,n_{l}}$ represents  the large-scale fading coefficient accounting for path loss and shadowing, and $\bar{\textbf{h}}_{m,n_{l}} \in \mathbb{C}^{K \times 1}$ is  small-scale fading vector in which the components are independent and identically distributed (i.i.d.) $\mathcal{CN} (0,1)$ RVs. The training signals received at  AP$_m$ can be written as follows:
\begin{align}\label{eq:Ypm}
\textbf{Y}^p_{m} = \sum\limits_{l\in\mathcal{L}} \sum\limits_{n_{l}\in\mathcal{N}_l} \sqrt{\rho_{n_l}} \textbf{h}_{m,n_{l}} \boldsymbol{\phi}_{l}^H + \textbf{W}^{p}_{m},
\end{align} 
where $\rho_{n_l}$ and $\textbf{W}^{p}_{m} \in \mathbb{C}^{K \times \tau_p}$ are the normalized transmit power of UE ${n_{l}}$ and the additive noise matrix at AP$_m$ whose elements follow $\mathcal{CN} (0,1)$, respectively. 

Given $\textbf{Y}^{p}_{m}$,  AP$_m$ estimates $\textbf{h}_{m,n_{l}}$  using the minimum mean square error (MMSE) criterion. The projection $\hat{\textbf{y}}^{p}_{m} \in \mathbb{C}^{K \times 1}$ of $\textbf{Y}^{p}_{m}$ at  AP$_m$ onto $\boldsymbol{\phi}_{l}$ can be derived as follows:
\begin{align}
\hat{\textbf{y}}^{p}_{m} = \textbf{Y}^{p}_{m} \boldsymbol{\phi}_{l} = \tau_p \sum\limits_{n_{l}\in\mathcal{N}_l} \sqrt{\rho_{n_l}} \textbf{h}_{m,n_{l}} + \textbf{W}^{p}_{m} \boldsymbol{\phi}_{l}.
\end{align}
Hence, the MMSE estimate of $\textbf{h}_{m,n_{l}}$ is given as
\begin{align}\label{eq:hmnl}
\hat{\textbf{h}}_{m,n_{l}} = \frac{\mathbb{E}\{\textbf{h}_{m,n_{l}} (\hat{\textbf{y}}^{p}_{m})^H\}}{\mathbb{E} \{ \hat{\textbf{y}}^{p}_{m} (\hat{\textbf{y}}^{p}_{m})^H \}} \hat{\textbf{y}}^{p}_{m} = \upsilon_{m,n_{l}} \hat{\textbf{y}}^{p}_{m},
\end{align}
where $\ds\upsilon_{m,n_{l}} = \frac{\sqrt{\rho_{n_l}} \beta_{m,n_{l}}}{\tau_p \sum\limits_{{n'}_{l}\in\mathcal{N}_l} \rho_{{n'}_l} \beta_{m,{n'}_{l}} + 1}$.
The estimation error vector of $\textbf{h}_{m,n_{l}}$ is given as
\begin{align}\label{eq:e}
\textbf{e}_{m,n_{l}} = \textbf{h}_{m,n_{l}} - \hat{\textbf{h}}_{m,n_{l}},
\end{align}
where $\textbf{e}_{m,n_{l}}$ and $\hat{\textbf{h}}_{m,n_{l}}$ are i.i.d. RVs distributed as $\mathcal{CN}$ $(\textbf{0},\left(\beta_{m,n_{l}}-\gamma_{m,n_{l}}\right) \textbf{I}_{K})$ and $\mathcal{CN}$ $(\textbf{0},\gamma_{m,n_{l}} \textbf{I}_{K})$, respectively, with $\ds \gamma_{m,n_{l}} = \frac{\tau_p \rho_{n_l} \beta^{2}_{m,n_{l}}}{\tau_p \sum\limits_{{n'}_{l}\in\mathcal{N}_l} \rho_{{n'}_l} \beta_{m,{n'}_{l}} + 1}$. Note that there is no cooperation among APs to exchange the channel estimate information. 

\begin{remark} The so-called pilot contamination exists when  APs estimate the channels of UEs belonging to the same cluster. The relationship of channel estimates of UE ${n_{l}}$ and UE ${{n'}_{l}}$ in the $l$-th cluster with $n_{l} \neq {n'}_l$ and $n_{l}, {n'}_{l} \in \mathcal{N}_l$, at AP$_m$ is expressed as follows:
\begin{align}
\hat{\textbf{h}}_{m,n_{l}} = \frac{\sqrt{\rho_{n_l}} \beta_{m,n_{l}}}{\sqrt{\rho_{{n'}_l}} \beta_{m,{n'}_{l}}} \hat{\textbf{h}}_{m,{n'}_{l}}.
\end{align}
\end{remark}

\subsubsection{Downlink Data Transmission}
Under TDD operation, we consider the channel reciprocity to acquire CSI to precode the transmit signals in the downlink \cite{Ngo:IEEETGCN:Mar2018,Ngo:TWC:Mar2017}. In this paper, we adopt fpZF precoding \cite{InterdonatoGCSIP2018} to cancel inter-cluster interference, but still take into account  intra-cluster interference. Compared with the pure ZF \cite{HieuJSAC2020}, each AP computes fpZF precoding using  its local CSI only, leading to an implementable algorithm. From \eqref{eq:Ypm}, the full-rank matrix $\tilde{\textbf{H}}_m \in \mathbb{C}^{K \times \tau_p}$ of fpZF precoder at AP$_m$ is given by \cite{InterdonatoGCSIP2018}
\begin{align}\label{eq:Hm}
\tilde{\textbf{H}}_m = \textbf{Y}^p_{m} \boldsymbol{\phi},
\end{align}   
where $\boldsymbol{\phi} = [\boldsymbol{\phi}_1,\boldsymbol{\phi}_2,\cdots,\boldsymbol{\phi}_{\tau_p}] \in \mathbb{C}^{\tau_p \times \tau_p}$ denotes the collection of $\tau_p$ orthogonal pilot sequences. Hence, from \eqref{eq:hmnl} and \eqref{eq:Hm}, the channel estimate $\hat{\textbf{h}}_{m,n_{l}}$ is rewritten as
\begin{align}\label{eq:hmnl2}
\hat{\textbf{h}}_{m,n_{l}} = \upsilon_{m,n_{l}} \tilde{\textbf{H}}_m \boldsymbol{\varphi}_l,
\end{align}
where $\boldsymbol{\varphi}_l$ is the $l$-th column of the identity matrix $\textbf{I}_{\tau_p}$. 
From \eqref{eq:Hm} and \eqref{eq:hmnl2}, the beamforming vector $\boldsymbol{w}_{m,l} \in \mathbb{C}^{K \times 1}$ oriented to the $l$-th cluster at  AP$_m$ can be expressed as follows:
\begin{align}\label{eq:omega}
\boldsymbol{w}_{m,l} = \frac{ \tilde{\textbf{H}}_m \bigl( \tilde{\textbf{H}}_m^H \tilde{\textbf{H}}_m \bigr)^{-1} \boldsymbol{\varphi}_l }{\sqrt{\mathbb{E} \left\{ \bigr\|\tilde{\textbf{H}}_m \bigl( \tilde{\textbf{H}}_m^H \tilde{\textbf{H}}_m \bigr)^{-1} \boldsymbol{\varphi}_l \bigl\|^2_2 \right\}}}.
\end{align} 

The transmitted signal $\boldsymbol{x}_m \in \mathbb{C}^{K \times 1}$ from AP$_m$ is given by
\begin{align}\label{eq:xm}
\boldsymbol{x}_m = \sum\limits_{l\in\mathcal{L}} \sum\limits_{n_{l}\in\mathcal{N}_l} \sqrt{\rho^m_{n_{l}}} \boldsymbol{w}_{m,l} x_{n_{l}},
\end{align}
where $x_{n_{l}}$ is the symbol intended for UE $n_l$, and $\rho^m_{n_{l}}$ is the normalized transmit power (normalized by the noise power at AP$_m$) allocated to UE $n_l$ at AP$_m$. Besides, $x_{n_{l}}$ and $x_{{n'}_{l'}}$ for $l, {l'} \in \mathcal{L}$ and $n_l, {n'}_{l'} \in\mathcal{N} $ must satisfy the following condition
\begin{align}
\mathbb{E} \bigl\{ x_{n_{l}} (x_{{n'}_{l'}})^* \bigr\} = \begin{cases}
1, &  \text{if} \ \ l = {l'} \ \ \text{and} \ \ n = {n'},\\
0, & \text{otherwise}.
\end{cases}
\end{align} 
Then, the received signal at  UE ${n_{l}}$ in the $l$-th cluster can be written as
\begin{align}
y_{n_{l}} & = \sum\limits_{m\in\mathcal{M}} \textbf{h}_{m,n_{l}}^H \boldsymbol{x}_m + z_{n_{l}} \nonumber \\
& = \underbrace{\sum\limits_{m\in\mathcal{M}}  \sqrt{\rho^{m}_{n_{l}}}   \textbf{h}_{m,n_{l}}^H \boldsymbol{w}_{m,l} x_{n_{l}}}_{\text{Desired signal}} + \underbrace{\sum\limits_{m\in\mathcal{M}} \sum\limits_{{n'}_{l}\in\mathcal{N}_l \setminus \{n_{l}\}} \sqrt{\rho^{m}_{{n'}_{l}}} \textbf{h}_{m,n_{l}}^H \boldsymbol{w}_{m,l} x_{{n'}_{l}}}_{\text{Intra-cluster interference before SIC}} \nonumber \\
&\quad + \underbrace{\sum\limits_{m\in\mathcal{M}} \sum\limits_{{l'}\in\mathcal{L}\setminus\{l\}} \sum\limits_{n_{l'}\in\mathcal{N}_{l'}} \sqrt{\rho^{m}_{{n}_{l'}}} \textbf{h}_{m,n_{l}}^H \boldsymbol{w}_{m,{l'}} x_{{n}_{l'}}}_{\text{Inter-cluster interference}} + z_{n_{l}},  
\end{align}   
where $z_{n_{l}}\sim\mathcal{CN} (0,1)$ is  the additive white Gaussian noise (AWGN) at UE ${n_{l}}$. 

Without loss of generality, in the $l$-th cluster we consider a descending order of channel gain, i.e., UEs $1_l$ and $N_l$ are the users with strongest and weakest channel gains, respectively. By NOMA principle \cite{Dinh:JSAC:Dec2017,Islam:COMSurTutor:2017}, UE $n_l$ in the $l$-th cluster first decodes the signals of UEs ${n'}_l > n_l$ with poorer channel conditions, and then its own signal is  successively decoded  after removing the interference from those UEs.  Denote by $\text{SINR}^{{n'}_{l}}_{n_{l}}$ and $\text{SINR}^{{n'}_{l}}_{{n'}_{l}}$  the signal-to-interference-plus-noise ratios (SINRs) in decoding the signal of UE ${n'}_{l}$ by UE $n_{l}$ and itself, respectively.
Towards an efficient and implementable SIC, the following necessary condition is considered \cite{BasharTCOM2020}
\begin{align}
&\mathbb{E} \left\{ \text{log}_{2}\bigl(1+\text{SINR}^{{n'}_{l}}_{n_{l}}\bigr) \right\} \geq \mathbb{E} \left\{ \text{log}_{2}\bigr(1+\text{SINR}^{{n'}_{l}}_{{n'}_{l}}\bigl)\right\},\ \forall n_{l} < {n'}_{l}, \forall l\in\mathcal{L}.
\end{align} 

\begin{remark} We note that perfect SIC is practically unattainable owing to the effects of intra-cluster pilot contamination and channel estimation errors. Consequently, the received signal at  UE ${n_{l}}$ in the $l$-th cluster after SIC processing can be written as follows:
\begin{align}\label{eq:ydnl}
\bar{y}_{n_{l}} & = \underbrace{\sum\limits_{m\in\mathcal{M}} \sqrt{\rho^{m}_{n_{l}}} \textbf{h}_{m,n_{l}}^H \boldsymbol{w}_{m,l} x_{n_{l}}}_{\text{Desired signal}} + \underbrace{\sum\limits_{m\in\mathcal{M}} \sum\limits_{{n'}_{l}=1}^{n_{l} - 1} \sqrt{\rho^{m}_{{n'}_{l}}} \textbf{h}_{m,n_{l}}^H \boldsymbol{w}_{m,l} x_{{n'}_{l}}}_{\text{Intra-cluster interference after SIC}} \nonumber \\
&+ \underbrace{\sqrt{\zeta_{{n}_{l}}} \sum\limits_{m \in\mathcal{M}} \sum\limits_{{n''}_{l} = n_{l} + 1}^{N_l} \sqrt{\rho^{m}_{{n''}_{l}}} \textbf{h}_{m,n_{l}}^H \boldsymbol{w}_{m,l} x_{{n''}_{l}} }_{\text{Intra-cluster interference due to imperfect SIC}} + \underbrace{\sum\limits_{m\in\mathcal{M}} \sum\limits_{{l'}\in\mathcal{L}\setminus \{l\}} \sum\limits_{n_{l'}\in\mathcal{N}_{l'}} \sqrt{\rho^{m}_{{n}_{l'}}} \textbf{h}_{m,n_{l}}^H \boldsymbol{w}_{m,{l'}} x_{{n}_{l'}}}_{\text{Inter-cluster interference}} + z_{n_{l}},
\end{align}    
where $\zeta_{{n}_{l}}$ is a general SIC performance coefficient at  UE ${n_{l}}$ in the $l$-th cluster. In particular, $\zeta_{{n}_{l}} = 1$ ($\zeta_{{n}_{l}} = 0$) indicates no SIC (perfect SIC), while $0 < \zeta_{{n}_{l}} < 1$ means imperfect SIC.
\end{remark}

{\color{black}\section{Clustering Cell-Free Massive MIMO-NOMA System}\label{sec:cluster}
In this section, we propose two unsupervised ML-based UC algorithms to effectively divide all UEs into separate clusters, which are done at the CPU by exploiting the large-scale fading coefficients. Similarly to \cite{BasharTCOM2020} and \cite{PalouGC2018},  large-scale fading coefficients of UEs are assumed to be collected and shared with the CPU before performing  UC algorithms. We note that  it is only necessary to estimate the large-scale fading coefficients once every 40 $\tau_c$ intervals \cite{LiWCL2018}, and thus conveying these coefficients via
the backhaul links occurs much less frequently than data transmission. Denote by $\boldsymbol{\beta}_{n} \triangleq [\beta_{1,n},\beta_{2,n},\ldots,\beta_{M,n}]^T \in \mathbb{R}^{M \times 1}$ the set of large-scale fading coefficients from all APs associated to UE $n,\forall n\in\mathcal{N}$. The vector $\boldsymbol{\beta}_{n}$ can be considered as an effective feature-vector denoting the location of UE $n$. 
\subsection{The k-means Algorithm}\label{sec:bac}
The k-means algorithm for UC studied in\cite{CuiTWC2018} and \cite{PalouGC2018} is one of  the simplest unsupervised  ML  algorithms to partition UEs in the coverage area into separate groups.
The key idea is to find a user-specified number of clusters $L$, which are represented by $L$ centroids, one for each cluster. The number of clusters $L$ in the k-means algorithm can be predetermined. The principle of k-means algorithm is given as follows. Firstly, $L$ initial centroids are randomly selected. Secondly, each point is assigned to the nearest centroid, and each mass of points assigned to the same centroid creates a cluster. Then, the centroid of each cluster is updated according to the points associated to the cluster. The assignment and update processes of centroids are repeated until either there is no change in the clusters or  centroids remain similarly.

In the context of CFmMIMO-NOMA, the procedure of k-means can be summarized as follows:
\begin{itemize}
\item Step 1: $L$ initial centroids are randomly selected from $N$ UEs, where $L$ is a predefined number. Let us define the set of $L$ cluster centroids as follows:
\begin{align}
\mathcal{C} = \left\{c_l, l \in \mathcal{L}\right\},
\end{align}
where  $c_l$ represents the centroid of the $l$-th cluster.

\item Step 2: Each UE $n\in\mathcal{N}$ is grouped to the nearest centroid, and hence,  UEs assigned to the same centroid creates a cluster:
\begin{align}\label{eq:Nl}
%\mathcal{N}_{l'} = \left\{ \boldsymbol{\beta}_n: f_{d}\left(\boldsymbol{\beta}_n,\boldsymbol{\beta}_{c_{l'}}\right) \leq f_{d}\left(\boldsymbol{\beta}_n,\boldsymbol{\beta}_{c_l}\right), \forall l,l' \in \mathcal{L} \ \text{and} \ l \neq {l'}\right\},
{l'} = \mathop {\arg \min }\limits_{\forall l\in\mathcal{L}} f_{d}\left(\boldsymbol{\beta}_n,\boldsymbol{\beta}_{c_l}\right),
\end{align}
where $f_{d}\left(\boldsymbol{\beta}_n,\boldsymbol{\beta}_{c_l}\right) =   {\|\boldsymbol{\beta}_n - \boldsymbol{\beta}_{c_l}\|}_2$ represents the Euclidean distance from UE $n$ to centroid $c_l$. As shown in \eqref{eq:Nl}, UE $n$ is grouped to $l'$-th cluster (denoted by centroid $c_{l'}$) since the distance from UE $n$ to centroid $c_{l'}$ is nearest. 

\item Step 3: The centroid of each cluster is recalculated under given UEs assigned to this cluster:
\begin{align}
\boldsymbol{\beta}_{c_l} = \frac{1}{|\mathcal{N}_l|} \sum\limits_{n \in \mathcal{N}_l} \boldsymbol{\beta}_n, \forall l \in \mathcal{L},
\end{align}
where $\boldsymbol{\beta}_{c_l}$ represents the updated centroid for the $l$-th cluster, which can be calculated by the mean of all UEs belonging to the $l$-th cluster.  

\item Step 4: Steps 2-3 are repeated until convergence, i.e.,  there is no change in the clusters or the centroids remain the same.
\end{itemize}

The k-means algorithm for UC in CFmMIMO-NOMA is given in Algorithm \ref{al1}. Note that k-means is a greedy algorithm, which can converge to a local minimum since its performance highly depends on the predefined number of clusters $L$ and the centroid initialization process, i.e., how to select $L$ initial centroids.

\begin{algorithm}[!htbp]
	\begin{algorithmic}[1]
		
		\protect\caption{The k-means Algorithm for UC in CFmMIMO-NOMA.}
		
		\label{al1}

		\STATE \textbf{Input:} $L$ and $\boldsymbol{\beta}_n, \forall n\in\mathcal{N}$.
		
		\STATE //**\textit{Identify $L$ cluster centroids at random $c_l$, $\forall l\in\mathcal{L}$ (Step 1)**}//
		
		\STATE Set $\mathcal{C} = \emptyset$ and $l = 1$, where $\mathcal{C}$ denotes the set of cluster centroids.

		\WHILE {$l \leq L$}
		
		\STATE $c_l = \sf{generateRandom}[1,\textit{N}]$;
		
		\IF {$c_l \not\in \mathcal{C}$}
		
		\STATE $\mathcal{C} \leftarrow c_l$;
		
		\STATE $l = l + 1$;
		
		\ENDIF
		
		\ENDWHILE
		
		\STATE //**\textit{Main process}**//
		
		\WHILE {$\mathcal{C}$ changes}
		
		\STATE //**\textit{Identify $\mathcal{N}_{l'}$, $\forall {l'}\in\mathcal{L}$, containing the subset of UEs that are closer to $c_{l'}$ than $c_l$, with ${l'} \neq l$ (Step 2)}**//
		
		\FOR {$n \in \mathcal{N} \backslash \mathcal{C}$}
		
		\STATE ${l'} = \mathop {\arg \min }\limits_{\forall l\in\mathcal{L}} f_{d}\left(\boldsymbol{\beta}_n,\boldsymbol{\beta}_{c_l}\right)$, where $f_{d}\left(\boldsymbol{\beta}_n,\boldsymbol{\beta}_{c_l}\right)=  {||\boldsymbol{\beta}_n - \boldsymbol{\beta}_{c_l}||}_2$;
		
		\STATE $\mathcal{N}_{l'} \leftarrow n$;
		
		\ENDFOR
		
		\STATE //**\textit{Recalculate $c_l$ of cluster $\mathcal{N}_l$, $\forall l\in\mathcal{L}$ (Step 3)}**//
		
		\FOR {$l = 1:L$}
		
		\STATE $\boldsymbol{\beta}_{c_l} = \frac{1}{|\mathcal{N}_l|} \sum\limits_{n \in \mathcal{N}_l} \boldsymbol{\beta}_n$;
		
		\ENDFOR
		
		\ENDWHILE 
		
		\STATE \textbf{Output:} $\mathcal{N}_l$ and $c_l$, $\forall l\in\mathcal{L}$.
	\end{algorithmic} 
\end{algorithm}

\subsection{Proposed k-means++ Algorithm}\label{sec:ba}
One drawback of the k-means algorithm is that it is sensitive to the initialization of the centroids \cite{ArthurDA2007,FrantiPR2019}. If an initial centroid is a far point, it might not associate with any other points. Equivalently, more than one initial centroids might be created into the same cluster which leads to poor grouping. In this section, the k-means++ algorithm is developed to resolve this issue. It aims at providing a clever initialization of the centroids that improves the quality of the grouping process. Except for the improvement in  the centroid initialization process, the remainder of k-means++ algorithm is the same as in the k-means. In the context of CFmMIMO-NOMA, the procedure of proposed k-means++ can be summarized as follows:
\begin{itemize}
	\item Step 1: The first initial centroid $c_1$ is randomly selected from $N$ UEs.
	
	\item Step 2: For each UE $n$ (with $n \in \mathcal{N} \ \text{and} \ n \not\in \mathcal{C}$), its distance from the nearest centroid is calculated as follows:
	\begin{align}\label{eq:fd}
	f_{d}\left(\boldsymbol{\beta}_n,\boldsymbol{\beta}_{c_t}\right)=  {\|\boldsymbol{\beta}_n - \boldsymbol{\beta}_{c_t}\|}_2,
	\end{align} 
	where ${c_t} = \mathop {\arg \min }\limits_{\forall c_{l}\in\mathcal{C}} f_{d}\left(\boldsymbol{\beta}_n,\boldsymbol{\beta}_{c_l}\right)$. 
	
	\item Step 3: The next centroid is selected from UEs $(\forall n \in \mathcal{N} \backslash \mathcal{C})$ such that the probability of selecting a UE as a centroid is in direct proportion to its distance from the nearest and previously selected centroid, i.e., the UE having the maximum distance from the nearest centroid is virtually to be chosen next as a centroid:
	\begin{align}
	c_l = \mathop {\arg \max }\limits_{\forall n \in \mathcal{N} \backslash \mathcal{C}} f_{d}\left(\boldsymbol{\beta}_n,\boldsymbol{\beta}_{c_t}\right).
	\end{align}
	
	\item Step 4: Steps 2-3 are repeated until $L-1$ centroids are selected.
	
	\item The remaining process follows Steps 2-4 in the k-means algorithm. 
\end{itemize}

The centroid initialization process of the proposed k-means++ ensures that chosen centroids are far away from each other. This increases the opportunity of initially selecting centroids that are located in different clusters. The proposed k-means++ algorithm for UC in CFmMIMO-NOMA is described in Algorithm \ref{al3}.

%\begin{algorithm}[!htbp]
%	\begin{algorithmic}[1]
%		
%		\protect\caption{Centroids initialization process in the k-means++ algorithm}
%		
%		\label{al2}
%		
%		
%		\STATE \textbf{Input:} $L$ and $\boldsymbol{\beta}_n$, $n=1,\ldots,\textit{N}$.
%		
%		\STATE 1. Identify the first cluster centroid $c_1$:
%		
%		\STATE $C = \emptyset$, where $C$ denotes the set of cluster centroids.
%		
%		\STATE $c_1 = \rm{generateRandom}[1,\textit{N}]$.
%		
%		\STATE $C \leftarrow c_1$.
%		
%		\STATE $f = 0$.
%		
%		\STATE 2. Identify $\textit{L}-1$ cluster centroids, $c_l$, $l = 2, \ldots,L$:
%		
%		\FOR {$l = 2:L$}
%		
%		\FOR {$n = 1:N$}
%		
%		\FOR {$t = 1:l-1$}
%		
%		\IF {n $\neq$ $c_t$}
%		
%		\STATE $dis\left(1,t\right) = d\left(\boldsymbol{\beta}_n,\boldsymbol{\beta}_{c_t}\right)$, where $d\left(\boldsymbol{\beta}_n,\boldsymbol{\beta}_{c_t}\right)=  ||\boldsymbol{\beta}_n - \boldsymbol{\beta}_{c_t}||_2$.
%		
%		\ELSE
%		
%		\STATE $dis\left(1,t\right) = \rm{NaN}$.
%		
%		\STATE $f = f + 1$.
%		
%		\ENDIF
%		
%		\ENDFOR
%		
%		\IF {f == 0}
%		
%		\STATE $dist\left(1,n\right) = \rm{max} \ \textit{dis}$.
%		
%		\ELSE
%		
%		\STATE $dist\left(1,n\right) = \rm{NaN}$.
%		
%		\STATE $f = 0$.
%		
%		\ENDIF
%		
%		\ENDFOR
%		
%		\STATE $c_l = \mathop {\arg \max }\limits_{n \in \mathcal{N} \backslash C} \textit{dist}$.
%		
%		\STATE $C \leftarrow c_l$.
%		
%		\ENDFOR
%		
%	\end{algorithmic} 
%\end{algorithm}

\begin{algorithm}[!htbp]
	\begin{algorithmic}[1]
%	\fontsize{10}{9}
	\protect\caption{The k-means++ Algorithm for UC in CFmMIMO-NOMA.}
	
	\label{al3}

		\STATE \textbf{Input:} $L$ and $\boldsymbol{\beta}_n$, $\forall n\in\mathcal{N}$.
		
%		\STATE //**\textit{Identify the first cluster centroid $c_1$ (Step 1)**}//
		
		\STATE Set $\mathcal{C} = \emptyset$ and $c_1 = \sf{generateRandom}[1,\textit{N}]$;
		
		\STATE $\mathcal{C} \leftarrow c_1$ and set $f = 0$;
		
%		\STATE //**\textit{Identify $\textit{L}-1$ cluster centroids, $c_l$, $l = 2, \ldots,L$ (Steps 2 \& 3)**}//
		
		\FOR {$l = 2:L$}
		
		\FOR {$n = 1:N$}
		
		\FOR {$t = 1:l-1$}
		
		\IF {$n$ $\neq$ $c_t$}
		
		\STATE $dis\left(1,t\right) = f_{d}\left(\boldsymbol{\beta}_n,\boldsymbol{\beta}_{c_t}\right)$, where $f_{d}\left(\boldsymbol{\beta}_n,\boldsymbol{\beta}_{c_t}\right)=  {\|\boldsymbol{\beta}_n - \boldsymbol{\beta}_{c_t}\|}_2$;
		
		\ELSE
		
		\STATE $dis\left(1,t\right) = \rm{NaN}$;
		
		\STATE $f = f + 1$;
		
		\ENDIF
		
		\ENDFOR
		
		\IF {$f == 0$}
		
		\STATE $dist\left(1,n\right) = \rm{max} \ \textit{dis}$;
		
		\ELSE
		
		\STATE $dist\left(1,n\right) = \rm{NaN}$;
		
		\STATE $f = 0$;
		
		\ENDIF
		
		\ENDFOR
		
		\STATE $c_l = \mathop {\arg \max }\limits_{\forall n \in \mathcal{N} \backslash \mathcal{C}} \textit{dist}$;
		
		\STATE $\mathcal{C} \leftarrow c_l$;
		
		\ENDFOR
		
%		\STATE //**\textit{Main process}**//
		
		\WHILE {$\mathcal{C}$ changes}
		
%		\STATE //**\textit{Identify $\mathcal{N}_{l'}$, $\forall {l'}\in\mathcal{L}$, containing the subset of UEs that are closer to $c_{l'}$ than $c_l$, with ${l'} \neq l$ (Step 2)}**//
%		 

		\FOR {$n \in \mathcal{N} \backslash \mathcal{C}$}
		
		\STATE ${l'} = \mathop {\arg \min }\limits_{\forall l\in\mathcal{L}} f_{d}\left(\boldsymbol{\beta}_n,\boldsymbol{\beta}_{c_l}\right)$, where $f_{d}\left(\boldsymbol{\beta}_n,\boldsymbol{\beta}_{c_l}\right)=  {\|\boldsymbol{\beta}_n - \boldsymbol{\beta}_{c_l}\|}_2$;
		
		\STATE $\mathcal{N}_{l'} \leftarrow n$;
		
		\ENDFOR
	
%		\STATE //**\textit{Recalculate $c_l$ of cluster $\mathcal{N}_l$, $\forall l\in\mathcal{L}$ (Step 3)}**//
		
		\FOR {$l = 1:L$}
		
		\STATE $\boldsymbol{\beta}_{c_l} = \frac{1}{|\mathcal{N}_l|} \sum\limits_{n \in \mathcal{N}_l} \boldsymbol{\beta}_n$;
		
		\ENDFOR
		
		\ENDWHILE 
		
		\STATE \textbf{Output:} $\mathcal{N}_l$ and $c_l$, $\forall l\in\mathcal{L}$.
		\end{algorithmic} 
\end{algorithm}

\subsection{The Improved k-means++ Algorithm}
As shown in Sections \ref{sec:bac} and \ref{sec:ba}, the performance of the k-means algorithm can be enhanced by  selecting  $L$ initial centroids more effectively. Based on the characteristics of CFmMIMO-NOMA, we propose the improved k-means++ algorithm which includes a new approach to cleverly select  $L$ initial centroids. The procedure of improved k-means++ is summarized as follows:
\begin{itemize}
	\item Step 1: Each AP  identifies an associated UE, denoted by $\Lambda_m$,  which has the best connection, i.e., highest large-scale fading coefficient $\beta_{m,n}$:
	\begin{align}
	\Lambda_m = \mathop {\arg \max }\limits_{\forall n\in\mathcal{N}} \beta_{m,n}, \forall m \in \mathcal{M}.
	\end{align}

	\item Step 2: The CPU then selects a subset of APs, denoted by $\Upsilon_n$, which have best connections to  UE $n$:
	\begin{align}
	\Upsilon_n = \left\{ \text{AP}_{m}: \text{UE} \ {n} == \Lambda_m \right\}, \forall n \in \mathcal{N}.
	\end{align}

	\item Step 3: The CPU selects a UE having the highest number of serving APs as a centroid:
	\begin{align}
	c_l = \mathop {\arg \max }\limits_{\forall n \in \mathcal{N} \backslash \mathcal{C}} |\Upsilon_n|,
	\end{align} 
	where $|\Upsilon_n|$ denotes the cardinality of $\Upsilon_n$.
	
	\item Step 4: Step 3 is repeated until $L$ centroids are chosen.
	
	\item The remaining process follows Steps 2-4 in the k-means algorithm. 

\end{itemize}

The centroid initialization process of the improved k-means++ for UC in CFmMIMO-NOMA is described in Algorithm \ref{al4}.

\begin{algorithm}[!htbp]
	\begin{algorithmic}[1]
	
	\protect\caption{Centroid Initialization Process of the Improved k-means++ Algorithm for UC in CFmMIMO-NOMA.}
	
	\label{al4}
		
		\STATE \textbf{Input:} $L$ and $\boldsymbol{\beta}_n, \forall n\in\mathcal{N}$.
		
%		\STATE //**\textit{Identify UE that has the best connection to each AP (Step 1)}**//
		
		\FOR {$m = 1:M$}
		
		\STATE $\Lambda_m = \mathop {\arg \max }\limits_{\forall n\in\mathcal{N}} \beta_{m,n}$;
		
		\ENDFOR	   
		
%		\STATE //**\textit{Identify the subset of APs that have best connections to each UE (Step 2)}**//
		
		\FOR {$n = 1:N$}
		
		\FOR {$m = 1:M$}
		
		\IF {$n == \Lambda_m$}
		
		\STATE $\Upsilon_n \leftarrow m$;
		
		\ENDIF
		
		\ENDFOR
		
		\ENDFOR
		
%		\STATE //**\textit{Identify \textit{L} cluster centroids, $c_l$, $\forall l\in\mathcal{L}$, that have large number of serving APs (Steps 3 \& 4)}**//
		
		\STATE $\mathcal{C} = \emptyset$, where $\mathcal{C}$ denotes the set of cluster centroids.
		
		\FOR {$l = 1:L$}
		
		\STATE $c_l = \mathop {\arg \max }\limits_{\forall n \in \mathcal{N} \backslash \mathcal{C}} |\Upsilon_n|$;
		
		\STATE $\mathcal{C} \leftarrow c_l$;
		
		\ENDFOR
		
		\STATE \textbf{Output:} $\mathcal{C}$.
		
	\end{algorithmic}
	
\end{algorithm}

\subsection{Complexity Analysis}
As shown in \cite{CuiTWC2018}, the complexity of the k-means algorithm is $\mathcal{O}\left(N L I M\right)$, where $I$ denotes the total number of iterations until convergence.
We recall that compared to the k-means, the k-means++ and  improved k-means++ algorithms require the modification of  centroid initialization process.  
All centroids in the k-means algorithm are randomly chosen, which leads to the computational complexity of $\mathcal{O}\left(N\right)$. The proposed k-means++ algorithm has to make a full search through all UEs for every centroid sampled, resulting  to the complexity of $\mathcal{O}\left(N L M\right)$ \cite{BachemAI2016}. Similarly, the complexity of the improved k-means++ algorithm is $\mathcal{O}\left(M N + N M + L N\right)$, which is lower than that of the k-means++. In addition, although the centroid initialization process in the proposed k-means++ algorithms is computationally more expensive than the original k-means, the performance of the former is much better than the latter. This will be elaborated in Section \ref{sec}.}

\section{Performance Analysis} \label{sec:PA}
Given the UC algorithms in Section \ref{sec:cluster}, we now derive the SSE of CFmMIMO-NOMA. From \eqref{eq:ydnl}, the $\text{SINR}$ of UE $n_{l}$ in the $l$-{th} cluster is given as
\begin{equation} \label{eq:SINR}
\text{SINR}_{n_{l}} = \frac{|\text{DS}_{n_{l}}|^2}{\mathbb{E} \left\{|\text{BU}_{n_{l}}|^2\right\} + \sum\limits_{{n'}_{l}=1}^{n_{l} - 1} \mathbb{E} \left\{|\text{ICI}_{n_{l}}|^2\right\} + \sum\limits_{{n''}_{l} = n_{l} + 1}^{N_l} \mathbb{E} \left\{|\text{RICI}_{n_{l}}|^2\right\} + \sum\limits_{{l'}\in\mathcal{L}\setminus \{l\}} \sum\limits_{n_{l'}\in\mathcal{N}_{l'}} \mathbb{E} \left\{|\text{UI}_{n_{l}}|^2\right\} + 1},
\end{equation}
where $\text{DS}_{n_{l}} = \mathbb{E} \Bigl\{\sum\limits_{m\in\mathcal{M}} \sqrt{\rho^{m}_{n_{l}}} \textbf{h}_{m,n_{l}}^H \boldsymbol{w}_{m,l}\Bigr\}$, $\text{BU}_{n_{l}} = \Bigl(\sum\limits_{m\in\mathcal{M}} \sqrt{\rho^{m}_{n_{l}}} \textbf{h}_{m,n_{l}}^H \boldsymbol{w}_{m,l} - \mathbb{E} \Bigl\{\sum\limits_{m\in\mathcal{M}} \sqrt{\rho^{m}_{n_{l}}} \textbf{h}_{m,n_{l}}^H \boldsymbol{w}_{m,l}\Bigl\}\Bigl)$, $\text{ICI}_{n_{l}} = \sum\limits_{m\in\mathcal{M}} \sqrt{\rho^{m}_{{n'}_{l}}} \textbf{h}_{m,n_{l}}^H \boldsymbol{w}_{m,l}$, $\text{RICI}_{n_{l}} = \sqrt{\zeta_{{n}_{l}}} \sum\limits_{m\in\mathcal{M}} \sqrt{\rho^{m}_{{n''}_{l}}} \textbf{h}_{m,n_{l}}^H \boldsymbol{w}_{m,l}$, and $\text{UI}_{n_{l}} = \sum\limits_{m\in\mathcal{M}} \sqrt{\rho^{m}_{{n}_{l'}}} \textbf{h}_{m,n_{l}}^H \boldsymbol{w}_{m,{l'}}$ are coherent beamforming gain (desired signal), beamforming gain uncertainty, intra-cluster interference after SIC, residual interference due to imperfect SIC, and inter-cluster interference, respectively.

To simplify \eqref{eq:SINR}, we first compute the expectation term in the denominator of \eqref{eq:omega} \cite{Tulino2004}:
\begin{align}\label{eq:E}
\mathbb{E} \left\{ \bigr\|\tilde{\textbf{H}}_m \bigl( \tilde{\textbf{H}}_m^H \tilde{\textbf{H}}_m \bigr)^{-1} \boldsymbol{\varphi}_l \bigl\|^2_2 \right\} = \frac{\upsilon^2_{m,n_{l}}}{\gamma_{m,n_{l}} (K-\tau_p)},\ \forall n_{l}\in\mathcal{N}_{l}.
\end{align}

From \eqref{eq:hmnl2}, \eqref{eq:omega}, and \eqref{eq:E}, we have
\begin{align}\label{eq:hmnl3}
{\hat{\textbf{h}}}^H_{m,n_{i}} \boldsymbol{w}_{m,l} &= \frac{\upsilon_{m,n_{i}}}{\upsilon_{m,n_{l}}} \boldsymbol{\varphi}^H_i \boldsymbol{\varphi}_l \sqrt{\gamma_{m,n_{l}} (K-\tau_p)} \nonumber \\
& = \begin{cases}
\sqrt{\gamma_{m,n_{l}} (K-\tau_p)}, &  \text{if} \ \ {i = l},\\
0, & \text{if} \ \ i \neq l.
\end{cases}
\end{align}

\begin{lemma}\label{lemma1} The closed-form expression for the SE of UE ${n_{l}}$ in the $l$-th cluster is given by
\begin{align}\label{eq:Rate}
R_{n_{l}} &= \Bigl(1-\frac{\tau_p}{\tau_c}\Bigr) \log_{2}\Bigl(1+\mathrm{SINR}_{n_{l}}\Bigl) \nonumber \\
& = \Bigl(1-\frac{\tau_p}{\tau_c}\Bigl) \log_{2}\Bigl(1+\min_{{n'}_{l} = 1,\ldots,n_{l}}\mathrm{SINR}^{n_{l}}_{{n'}_{l}}\Bigl),\ \forall n_l.
\end{align}
 By $\boldsymbol{\rho}\triangleq\{\rho^m_{n_l}\}_{m\in\mathcal{M}, n_l\in\mathcal{N}_l, l\in\mathcal{L}}$, $\mathrm{SINR}^{n_{l}}_{n_{l}}$ and $\mathrm{SINR}^{n_{l}}_{{n'}_{l}}$, $\forall {n'}_{l} < n_{l}$,  are derived as follows:
\begin{align}\label{eq:SINR1}
\mathrm{SINR}^{n_{l}}_{n_{l}} = \frac{(K-\tau_p) \Bigl(\sum\limits_{m\in\mathcal{M}} \sqrt{\rho^{m}_{n_{l}} \gamma_{m,n_{l}}}\Bigl)^2}{\mathcal{I}^{n_{l}}_{n_{l}}(\boldsymbol{\rho}) + 1},
\end{align}
\begin{align}\label{eq:SINR2}
\mathrm{SINR}^{n_{l}}_{{n'}_{l}} = \frac{(K-\tau_p) \Bigl(\sum\limits_{m\in\mathcal{M}} \sqrt{\rho^{m}_{n_{l}} \gamma_{m,{n'}_{l}}}\Bigr)^2}{\mathcal{I}^{n_{l}}_{{n'}_{l}}(\boldsymbol{\rho}) + 1},
\end{align}
where $\mathcal{I}^{n_{l}}_{n_{l}}(\boldsymbol{\rho})$ and $\mathcal{I}^{n_{l}}_{{n'}_{l}}(\boldsymbol{\rho})$ are defined as
\begin{align}\label{eq:SINR2aa}
\mathcal{I}^{n_{l}}_{n_{l}}(\boldsymbol{\rho})&\triangleq\sum\limits_{{n''}_{l}\in\mathcal{N}_{l}\setminus \{n_{l}\}} \eta_{{n''}_{l'}} (K-\tau_p) \Bigl(\sum\limits_{m\in\mathcal{M}} \sqrt{\rho^{m}_{{n''}_{l}} \gamma_{m,n_{l}}}\Bigr)^2 \nonumber\\
&+ \sum\limits_{{l'}\in\mathcal{L}} \sum\limits_{{n''}_{l'}\in\mathcal{N}_{l'}} \sum\limits_{m\in\mathcal{M}} \eta_{{n''}_{l'}} \rho^{m}_{{n''}_{l'}} \left(\beta_{m,n_{l}}-\gamma_{m,n_{l}}\right),\nonumber\\
\mathcal{I}^{n_{l}}_{{n'}_{l}}(\boldsymbol{\rho})&\triangleq\sum\limits_{{n''}_{l}\in\mathcal{N}_{l}\setminus \{n_{l}\}} \eta_{{n''}_{l'}} (K-\tau_p) \Bigr(\sum\limits_{m\in\mathcal{M}} \sqrt{\rho^{m}_{{n''}_{l}} \gamma_{m,{n'}_{l}}}\Bigl)^2 \nonumber\\
&+ \sum\limits_{{l'}\in\mathcal{L}} \sum\limits_{{n''}_{l'}\in\mathcal{N}_{l'}} \sum\limits_{m\in\mathcal{M}} \eta_{{n''}_{l'}} \rho^{m}_{{n''}_{l'}} \left(\beta_{m,{n'}_{l}}-\gamma_{m,{n'}_{l}}\right),\nonumber
\end{align}
with 
\begin{equation}
\eta_{{n''}_{l'}} =
\begin{cases}
1, & \text{if} \ {l'} \ \neq \ l \ \text{or} \ {l'} \ = \ l \ \text{and} \ {n''}_{l} \ \leq \ {n}_{l},\nonumber\\
\zeta_{{n}_{l}}, & \text{otherwise}.\nonumber\\
\end{cases}       
\end{equation}

\end{lemma}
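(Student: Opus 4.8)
The plan is to evaluate \eqref{eq:SINR} in closed form by the standard ``use-and-then-forget'' bounding: the effective channel $\sum_{m}\sqrt{\rho^m_{n_l}}\,\textbf{h}_{m,n_l}^H\boldsymbol{w}_{m,l}$ seen by UE $n_l$ is split into its mean (the coherent gain $\mathrm{DS}_{n_l}$, kept as useful signal) and its zero-mean fluctuation (the term $\mathrm{BU}_{n_l}$, treated jointly with the interference as worst-case Gaussian noise), which gives an achievable rate. Every expectation in \eqref{eq:SINR} is then computed by writing $\textbf{h}_{m,n_l}=\hat{\textbf{h}}_{m,n_l}+\textbf{e}_{m,n_l}$ with $\textbf{e}_{m,n_l}$ zero-mean, of covariance $(\beta_{m,n_l}-\gamma_{m,n_l})\textbf{I}_K$, and independent of all precoders $\{\boldsymbol{w}_{m,{l'}}\}$ (which depend only on $\tilde{\textbf{H}}_m$), together with the key orthogonality identity \eqref{eq:hmnl3}, i.e.\ $\hat{\textbf{h}}_{m,n_i}^H\boldsymbol{w}_{m,l}=\sqrt{\gamma_{m,n_l}(K-\tau_p)}$ if $i=l$ and $0$ otherwise, and the normalization $\mathbb{E}\{\|\boldsymbol{w}_{m,{l'}}\|_2^2\}=1$ coming from \eqref{eq:omega} (so that $\mathbb{E}\{|\textbf{e}_{m,n_l}^H\boldsymbol{w}_{m,{l'}}|^2\}=\beta_{m,n_l}-\gamma_{m,n_l}$ for every ${l'}$).

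Concretely, I would compute the five building blocks in turn. For $\mathrm{DS}_{n_l}$ the error term vanishes under the expectation, so $\mathrm{DS}_{n_l}=\sum_m\sqrt{\rho^m_{n_l}\gamma_{m,n_l}(K-\tau_p)}$ by \eqref{eq:hmnl3}, giving the numerator $(K-\tau_p)\bigl(\sum_m\sqrt{\rho^m_{n_l}\gamma_{m,n_l}}\bigr)^2$ of \eqref{eq:SINR1}. For $\mathbb{E}\{|\mathrm{BU}_{n_l}|^2\}$, since $\hat{\textbf{h}}_{m,n_l}^H\boldsymbol{w}_{m,l}$ is deterministic by \eqref{eq:hmnl3}, the fluctuation is purely the error contribution, and independence across APs yields $\sum_m\rho^m_{n_l}(\beta_{m,n_l}-\gamma_{m,n_l})$. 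Each intra-cluster term $\mathbb{E}\{|\mathrm{ICI}_{n_l}|^2\}$ (for ${n'}_l<n_l$) and $\mathbb{E}\{|\mathrm{RICI}_{n_l}|^2\}$ (for ${n''}_l>n_l$, carrying the extra factor $\zeta_{n_l}$) splits as $|\text{mean}|^2+\text{variance}$, producing a coherent part $(K-\tau_p)\bigl(\sum_m\sqrt{\rho^m_{\cdot}\gamma_{m,n_l}}\bigr)^2$ and an incoherent part $\sum_m\rho^m_{\cdot}(\beta_{m,n_l}-\gamma_{m,n_l})$; the inter-cluster term $\mathbb{E}\{|\mathrm{UI}_{n_l}|^2\}$ loses its coherent part entirely because $\hat{\textbf{h}}_{m,n_l}^H\boldsymbol{w}_{m,{l'}}=0$ for ${l'}\neq l$, leaving only $\sum_m\rho^m_{n_{l'}}(\beta_{m,n_l}-\gamma_{m,n_l})$. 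Summing all denominator contributions and grouping the coherent parts (the ${n'}_l<n_l$ terms with weight $1$, the ${n''}_l>n_l$ terms with weight $\zeta_{n_l}$) and the incoherent parts (the $\mathrm{BU}$ term being exactly the ${l'}=l$, ${n''}_l=n_l$ case of weight $1$) reproduces $\mathcal{I}^{n_l}_{n_l}(\boldsymbol{\rho})+1$ with the coefficients $\eta_{{n''}_{l'}}$ precisely as stated, establishing \eqref{eq:SINR1}.

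The derivation of \eqref{eq:SINR2} is word-for-word identical after replacing the receiving UE $n_l$ by ${n'}_l$ wherever it occurs as a channel index (so $\gamma_{m,n_l}\!\to\!\gamma_{m,{n'}_l}$ and $\beta_{m,n_l}\!\to\!\beta_{m,{n'}_l}$), while keeping the transmit powers $\rho^m_{n_l}$ of UE $n_l$'s stream and the SIC bookkeeping governed by $n_l$; here the pilot-sharing relation of Remark~1 is what makes $\hat{\textbf{h}}_{m,{n'}_i}^H\boldsymbol{w}_{m,l}$ obey the same orthogonality identity \eqref{eq:hmnl3}. Finally, for the rate: UE $n_l$'s message must be decoded not only by UE $n_l$ but also, as part of their SIC chains, by every stronger co-clustered UE ${n'}_l\le n_l$, so its achievable rate is the minimum of the per-decoder rates, $R_{n_l}=(1-\tau_p/\tau_c)\log_2\bigl(1+\min_{{n'}_l=1,\dots,n_l}\mathrm{SINR}^{n_l}_{{n'}_l}\bigr)$; invoking the imposed SIC necessary condition (which, with ${n'}_l$ playing the role of the stronger decoder and $n_l$ that of the decoded message, reads $\mathrm{SINR}^{n_l}_{{n'}_l}\ge\mathrm{SINR}^{n_l}_{n_l}$), this minimum is attained at ${n'}_l=n_l$ and equals $\mathrm{SINR}_{n_l}$ as it appears in \eqref{eq:SINR}. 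The main obstacle I anticipate is the bookkeeping in that final assembly step — tracking which co-cluster UEs survive as interference, which have been imperfectly cancelled and hence enter with weight $\zeta_{n_l}$, and checking that the $\mathrm{BU}$, $\mathrm{ICI}$, $\mathrm{RICI}$ and $\mathrm{UI}$ contributions collapse exactly onto the two sums defining $\mathcal{I}^{n_l}_{n_l}(\boldsymbol{\rho})$ — rather than any single expectation, each of which is routine.
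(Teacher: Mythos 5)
Your proposal is correct and follows essentially the same route as the paper's Appendix A: decompose $\textbf{h}_{m,n_l}=\hat{\textbf{h}}_{m,n_l}+\textbf{e}_{m,n_l}$, exploit the fpZF orthogonality identity \eqref{eq:hmnl3} and the precoder normalization, compute the five moments $|\text{DS}|^2$, $\mathbb{E}\{|\text{BU}|^2\}$, $\mathbb{E}\{|\text{ICI}|^2\}$, $\mathbb{E}\{|\text{RICI}|^2\}$, $\mathbb{E}\{|\text{UI}|^2\}$ exactly as you describe, and assemble them into $\mathcal{I}^{n_l}_{n_l}(\boldsymbol{\rho})$ with the weights $\eta_{{n''}_{l'}}$. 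Your additional justification of the $\min$ over decoding UEs via the SIC necessary condition is consistent with the lemma statement, which the paper asserts without further elaboration.
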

\begin{proof}
The proof is given in Appendix \ref{ap1}.
\end{proof}

 We define the virtual channel of UE ${n_l}$ in the $l$-th cluster as $\textbf{h}_{n_{l}} = \left[\gamma_{1,n_{l}},\ldots,\gamma_{M,n_{l}}\right]^T$, $\forall n_l \in \mathcal{N}_l$. We assume that UEs in the $l$-th cluster are sorted based on their virtual channels, such as $\|\textbf{h}_{1_l}\|_2 \geq \|\textbf{h}_{2_l}\|_2 \geq \ldots \geq \|\textbf{h}_{N_l}\|_2$, $\forall l \in \mathcal{L}$.  From  \eqref{eq:Rate}, the SSE of CFmMIMO-NOMA is expressed as
\begin{align}
R_{\Sigma} = \sum\limits_{l \in \mathcal{L}} \sum\limits_{n_{l} \in \mathcal{N}_{l}} R_{n_{l}} = \Bigl(1-\frac{\tau_p}{\tau_c}\Bigr) \sum\limits_{l \in \mathcal{L}} \sum\limits_{n_{l} \in \mathcal{N}_{l}}  \log_{2}\Bigl(1+\mathrm{SINR}_{n_{l}}\Bigl).
\end{align}

 From \eqref{eq:SINR1} and \eqref{eq:SINR2}, it is clear that the SSE of CFmMIMO-NOMA  highly depends on the power allocation (PA) at all APs. Thus, it is necessary to optimize the transmit power at APs so that the SSE of CFmMIMO-NOMA can be enhanced, which will be detailed next.

\section{The Sum Spectral Efficiency Maximization}\label{sec:MPA}
%In this section, we first formulate an optimization problem to maximize the SSE and then develop an efficient iterative algorithm for its solution.
%\subsection{Optimization Problem Formulation}

We aim at optimizing the normalized transmit power  $\boldsymbol{\rho}\triangleq\{\rho^m_{n_l}\}_{m, n_l, l}$ to maximize the SSE under the constraints of the transmit power budget at the APs and SIC conditions. The optimization problem can be mathematically expressed as
\begin{subequations}\label{eq:opp}
	\begin{align}
	 \underset{\boldsymbol{\rho}}{\max} \label{eq:op1a}
	 & \quad\Bigl(1-\frac{\tau_p}{\tau_c}\Bigr) \sum\limits_{l \in \mathcal{L}} \sum\limits_{n_{l} \in \mathcal{N}_{l}}  \log_{2}\bigl(1+\mathrm{SINR}_{n_{l}}\bigl) \\ \label{eq:op1b}
	 \mathsf{s.t.}&\quad    \sum\limits_{l \in \mathcal{L}} \sum\limits_{n_{l} \in \mathcal{N}_l} \rho^m_{n_{l}} \leq P^{m}_{\text{max}}, \forall m \in \mathcal{M},\\ \label{eq:op1c}
	&\quad   \rho^m_{n_l} \leq \rho^m_{n_l+1}, n_l \in \left[1,N_l-1\right], \forall m \in \mathcal{M},  l \in \mathcal{L}.
	\end{align}
\end{subequations}

Herein, constraint \eqref{eq:op1b} indicates that the total transmit power at AP$_{m}$ is limited by the normalized maximum power $P^{m}_{\text{max}}$, while constraint \eqref{eq:op1c} is the necessary condition to implement SIC in the $l$-th cluster, $\forall l \in \mathcal{L}$.
We note that $\mathrm{SINR}_{n_{l}}$ in \eqref{eq:op1a} is a nonconvex and nonsmooth function with respect to $\boldsymbol{\rho}$, making problem \eqref{eq:opp} intractable.
Therefore, it may not be possible to solve the problem directly. In addition, the globally optimal solution (e.g., exhaustive search) comes at the cost of high computational complexity, and may not be suitable for practical implementation. In what follows, we develop newly approximated functions using the IA framework \cite{Marks:78,Beck:JGO:10}, and then propose a fast converging and low-complexity algorithm.

\textit{Equivalent Optimization Problem:} 
To apply the IA method, several transformations are necessary to make \eqref{eq:opp} tractable. To do so, we introduce the auxiliary variables $\mathbf{r}\triangleq\big\{r_{n_{l}}\big\}_{\forall n_{l}}$ and $\boldsymbol{\varphi}\triangleq\big\{\varphi_{n_{l}}\big\}_{\forall n_{l}}$ to  rewrite \eqref{eq:opp} equivalently as
\begin{subequations}\label{eq:op2} 
	\begin{align}
	\underset{\boldsymbol{\rho}, \mathbf{r}, \boldsymbol{\varphi}}{\max} \label{eq:op2a}
	&\quad \Bigl(1-\frac{\tau_p}{\tau_c}\Bigr) \sum\limits_{l \in \mathcal{L}} \sum\limits_{n_{l} \in \mathcal{N}_l} r_{n_{l}} \\ 
	\mathsf{s.t.} &\quad  \ln\left(1+\varphi_{n_{l}}\right) \geq r_{n_{l}}\ln2,\ \forall n_{l} \in \mathcal{N}_l,   \label{eq:op2b}\\ 
	&\quad \text{SINR}^{n_{l}}_{{n'}_{l}} \geq  \varphi_{n_{l}},\ \forall {n'}_{l} < n_{l},\ \forall n_{l} \in \mathcal{N}_l, \label{eq:op2c}\\
	&\quad \text{SINR}^{n_{l}}_{n_{l}} \geq  \varphi_{n_{l}},\ \forall n_{l} \in \mathcal{N}_l, \label{eq:op2d}\\
	&\quad  \eqref{eq:op1b}, \eqref{eq:op1c}\label{eq:op2e}.
	\end{align}
\end{subequations} 

It is clear that the objective function becomes linear. The equivalence between \eqref{eq:opp} and \eqref{eq:op2} is verified by the following lemma.

\begin{lemma}\label{lemma2}
Problems \eqref{eq:opp} and \eqref{eq:op2} share the same optimal solution set and the same optimal objective value. In particular, let $(\boldsymbol{\rho}^\star, \mathbf{r}^\star, \boldsymbol{\varphi}^\star)$ be the optimal solution to problem \eqref{eq:op2},  then $\boldsymbol{\rho}^\star$ is also the optimal solution to problem  \eqref{eq:opp} and vice versa.
\end{lemma}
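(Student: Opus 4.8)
The plan is to recognize \eqref{eq:op2} as an epigraph-type reformulation of \eqref{eq:opp} and to prove the claim by exhibiting feasibility maps in both directions, together with a monotonicity argument that forces the coupling constraints to be active at an optimum. First I would show the optimal value of \eqref{eq:op2} is at least that of \eqref{eq:opp}: take any $\boldsymbol{\rho}$ feasible for \eqref{eq:opp}, i.e., satisfying \eqref{eq:op1b}--\eqref{eq:op1c}, and set $\varphi_{n_{l}}\triangleq\mathrm{SINR}_{n_{l}}=\min_{{n'}_{l}=1,\ldots,n_{l}}\mathrm{SINR}^{n_{l}}_{{n'}_{l}}$ and $r_{n_{l}}\triangleq\log_{2}(1+\varphi_{n_{l}})$ for every $n_{l}$. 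Then \eqref{eq:op2b} holds with equality, \eqref{eq:op2c}--\eqref{eq:op2d} hold because $\varphi_{n_{l}}$ is, by construction, no larger than each of $\mathrm{SINR}^{n_{l}}_{{n'}_{l}}$ (for ${n'}_{l}<n_{l}$) and $\mathrm{SINR}^{n_{l}}_{n_{l}}$, and \eqref{eq:op2e} is inherited verbatim. Hence $(\boldsymbol{\rho},\mathbf{r},\boldsymbol{\varphi})$ is feasible for \eqref{eq:op2}, and by \eqref{eq:Rate} its objective equals the objective of \eqref{eq:opp} at $\boldsymbol{\rho}$; maximizing over $\boldsymbol{\rho}$ gives the claimed inequality between the two optimal values.

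Next I would establish the reverse inequality. Let $(\boldsymbol{\rho}^\star,\mathbf{r}^\star,\boldsymbol{\varphi}^\star)$ be optimal for \eqref{eq:op2}. Since \eqref{eq:op2e} already contains \eqref{eq:op1b}--\eqref{eq:op1c}, $\boldsymbol{\rho}^\star$ is feasible for \eqref{eq:opp}. Constraints \eqref{eq:op2c}--\eqref{eq:op2d} give $\varphi^\star_{n_{l}}\le\min_{{n'}_{l}\le n_{l}}\mathrm{SINR}^{n_{l}}_{{n'}_{l}}=\mathrm{SINR}_{n_{l}}$ evaluated at $\boldsymbol{\rho}^\star$ (using the identity from Lemma~\ref{lemma1}), and then \eqref{eq:op2b} together with monotonicity of $t\mapsto\log_{2}(1+t)$ yields $r^\star_{n_{l}}\le\log_{2}(1+\varphi^\star_{n_{l}})\le\log_{2}(1+\mathrm{SINR}_{n_{l}})$. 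Summing over all $n_{l}$ and scaling by $(1-\tau_p/\tau_c)$ shows the objective of \eqref{eq:opp} at $\boldsymbol{\rho}^\star$ is at least the optimal value of \eqref{eq:op2}. The two optimal values therefore coincide, and the same chain of inequalities shows $\boldsymbol{\rho}^\star$ attains the optimum of \eqref{eq:opp} and hence is optimal for it; conversely, the construction in the first paragraph maps any optimal $\boldsymbol{\rho}$ of \eqref{eq:opp} to a feasible triple of \eqref{eq:op2} whose value equals the common optimal value, hence to an optimal solution of \eqref{eq:op2}. This gives both assertions of the lemma.

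For completeness I would note that at any optimum of \eqref{eq:op2} the constraints \eqref{eq:op2b}--\eqref{eq:op2d} are active: the objective is strictly increasing in each $r_{n_{l}}$, so \eqref{eq:op2b} is tight; and the right-hand side of \eqref{eq:op2b} is increasing in $\varphi_{n_{l}}$ while \eqref{eq:op2c}--\eqref{eq:op2d} only bound $\varphi_{n_{l}}$ from above, so $\varphi^\star_{n_{l}}=\mathrm{SINR}_{n_{l}}(\boldsymbol{\rho}^\star)$ and $r^\star_{n_{l}}=\log_{2}(1+\mathrm{SINR}_{n_{l}}(\boldsymbol{\rho}^\star))$. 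The step I expect to require the most care is exactly this tightness/monotonicity bookkeeping, in particular ensuring that the ``$\min$'' implicitly encoded by the family of inequalities \eqref{eq:op2c}--\eqref{eq:op2d} matches the $\min$ appearing in the rate expression \eqref{eq:Rate} of Lemma~\ref{lemma1}; once that is pinned down, the rest is the routine epigraph-reformulation argument.
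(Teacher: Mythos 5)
Your proof is correct, and it is in fact more complete than the paper's. The paper proves the lemma by a single contradiction argument: it assumes constraints \eqref{eq:op2c}--\eqref{eq:op2d} are inactive at an optimum of \eqref{eq:op2}, raises $\varphi_{n_{l}}$ to the corresponding $\min$ of SINRs, and obtains a strictly larger objective --- i.e., it only establishes the tightness claim that you relegate to your final paragraph. What the paper leaves implicit, and what you supply explicitly, is the other half of the epigraph equivalence: the feasibility map from any $\boldsymbol{\rho}$ feasible for \eqref{eq:opp} to a triple $(\boldsymbol{\rho},\mathbf{r},\boldsymbol{\varphi})$ feasible for \eqref{eq:op2} with identical objective value (giving $\mathrm{opt}\eqref{eq:op2}\geq\mathrm{opt}\eqref{eq:opp}$), together with the reverse chain $r^{\star}_{n_{l}}\leq\log_{2}(1+\varphi^{\star}_{n_{l}})\leq\log_{2}(1+\mathrm{SINR}_{n_{l}}(\boldsymbol{\rho}^{\star}))$ (giving the opposite inequality and the transfer of optimality of $\boldsymbol{\rho}^{\star}$). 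The underlying ingredients --- monotonicity of $t\mapsto\log_{2}(1+t)$ and of the objective in each $r_{n_{l}}$, and the fact that the family \eqref{eq:op2c}--\eqref{eq:op2d} encodes exactly the $\min$ in \eqref{eq:Rate} --- are the same in both arguments, so this is a difference of packaging and rigor rather than of mathematical substance; your two-sided version is the one that actually delivers both assertions of the lemma (equal optimal values and the bidirectional correspondence of optimal solutions) rather than only the activeness of the constraints.
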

\begin{proof}
The proof is done by showing the fact that constraints \eqref{eq:op2b}-\eqref{eq:op2d} will hold with equality at the optimum. We prove this statement by contradiction. Suppose that constraints \eqref{eq:op2c} and \eqref{eq:op2d} are inactive at the optimum for some users, i.e., there exists $\varphi'_{n_{l}} > 0$ such as $\min\bigl(\text{SINR}^{n_{l}}_{{n'}_{l}},  \text{SINR}^{n_{l}}_{n_{l}}\bigr) = \varphi'_{n_{l}} > \varphi^{\star}_{n_{l}}$. It is clear that  $\varphi'_{n_{l}}$ is also a feasible point to \eqref{eq:op2}, and $r'_{n_{l}} = \ln\left(1+\varphi'_{n_{l}}\right)   > \ln\left(1+\varphi^{\star}_{n_{l}}\right) = r^{\star}_{n_{l}}$. As a consequence, this results in a strictly larger objective value, i.e., $\bigl(1-\frac{\tau_p}{\tau_c}\bigr) \sum\limits_{l \in \mathcal{L}} \sum\limits_{n_{l} \in \mathcal{N}_l} r'_{n_{l}} > \bigl(1-\frac{\tau_p}{\tau_c}\bigr) \sum\limits_{l \in \mathcal{L}} \sum\limits_{n_{l} \in \mathcal{N}_l} r^{\star}_{n_{l}}$, which contradicts the assumption that
$(\boldsymbol{\rho}^\star, \mathbf{r}^\star, \boldsymbol{\varphi}^\star)$ represent the optimal solution to problem \eqref{eq:op2}.
\end{proof}

\textit{Inner Approximation (IA) for Problem \eqref{eq:op2}:} The nonconvex parts include \eqref{eq:op2c} and \eqref{eq:op2d}. The direct application of IA method is still not possible due to the complication of $\text{SINR}^{n_{l}}_{{n'}_{l}}$ and $\text{SINR}^{n_{l}}_{n_{l}}$. In the following, we make the change of variable as $\rho^m_{n_{l}}=(\hat{\rho}^m_{n_{l}})^2, \forall n_l \in \mathcal{N}_l$. Let us handle \eqref{eq:op2c} first by rewriting $\text{SINR}^{n_{l}}_{{n'}_{l}}$ as
\begin{align}\label{eq:SINR2:change}
\mathrm{SINR}^{n_{l}}_{{n'}_{l}} = \frac{(K-\tau_p) \bigl(\sum\limits_{m\in\mathcal{M}} \hat{\rho}^m_{n_{l}} \sqrt{ \gamma_{m,{n'}_{l}}}\bigr)^2}{\mathcal{I}^{n_{l}}_{{n'}_{l}}(\hat{\boldsymbol{\rho}}) + 1},
\end{align}
where $\hat{\boldsymbol{\rho}}\triangleq\{\hat{\rho}^m_{n_{l}}\}_{\forall n_l}$ and  $\mathcal{I}^{n_{l}}_{{n'}_{l}}(\hat{\boldsymbol{\rho}})\triangleq\sum\limits_{{n''}_{l}\in\mathcal{N}_{l}\setminus \{n_{l}\}} \eta_{{n''}_{l'}} (K-\tau_p) \bigr(\sum\limits_{m\in\mathcal{M}} \hat{\rho}^{m}_{{n''}_{l}}\sqrt{ \gamma_{m,{n'}_{l}}}\bigl)^2 
+ \sum\limits_{{l'}\in\mathcal{L}} \sum\limits_{{n''}_{l'}\in\mathcal{N}_{l'}} \sum\limits_{m\in\mathcal{M}}$ $ \eta_{{n''}_{l'}} \left(\hat{\rho}^{m}_{{n''}_{l'}}\right)^2 \bigl(\beta_{m,{n'}_{l}}$ - $ \gamma_{m,{n'}_{l}}\bigr).$
%
%\begin{align}\label{eq:SINR2equi}
%\text{SINR}^{n_{l}}_{{n'}_{l}} = \frac{(K-\tau_p) \left(\varpi^{n_{l}}_{{n'}_{l}}\right)^2}{\sum\limits_{{n''}_{l}\in\mathcal{N}_{l}\setminus \{n_{l}\}} \eta_{{n''}_{l'}} (K-\tau_p) \left(\tau^{{n'}_{l}}_{{n''}_{l}}\right)^2 + \sum\limits_{{l'} \in \mathcal{L}} \sum\limits_{{n''}_{l'} \in \mathcal{N}_{l'}} \sum\limits_{m \in \mathcal{M}} \eta_{{n''}_{l'}} \left(\hat{\rho}^{m}_{{n''}_{l'}}\right)^2 \left(\beta_{m,{n'}_{l}}-\gamma_{m,{n'}_{l}}\right) + 1}.
%\end{align} 
By introducing the slack variables $\boldsymbol{\varpi}\triangleq \{\varpi^{n_{l}}_{{n}_{l}}\}_{\forall n_{l}}, \boldsymbol{\tau}\triangleq\{\tau^{{n}_{l}}_{{n}_{l}}\}_{\forall n_{l}}$,	and 
$\boldsymbol{\theta}\triangleq\{\theta^{n_{l}}_{{n}_{l}}\}_{\forall n_{l}}$, constraint \eqref{eq:op2c}
can be equivalently rewritten as
\begin{subnumcases}{\label{eq:op2c1a}
	\eqref{eq:op2c} \Leftrightarrow} 
\sum\limits_{m \in \mathcal{M}} \hat{\rho}^{m}_{n_{l}}\sqrt{ \gamma_{m,{n'}_{l}}} \geq \varpi^{n_{l}}_{{n'}_{l}},\ \forall {n'}_{l} < n_{l}, \forall n_{l} \in \mathcal{N}_l, \IEEEyessubnumber\label{eq:op2c1b}\\
\sum\limits_{m \in \mathcal{M}} \hat{\rho}^{m}_{{n''}_{l}}\sqrt{ \gamma_{m,{n'}_{l}}} \leq \tau^{{n'}_{l}}_{{n''}_{l}},\ \forall {n'}_{l} < n_{l},\ \forall n_{l} \in \mathcal{N}_l,\IEEEyessubnumber\label{eq:op2c1c}\\
\mathcal{I}^{n_{l}}_{{n'}_{l}}(\hat{\boldsymbol{\rho}},\boldsymbol{\tau}) \leq \theta^{n_{l}}_{{n'}_{l}},\ \forall {n'}_{l} < n_{l},\ \forall n_{l} \in \mathcal{N}_l,\qquad\label{eq:op2c1d}\\
(K-\tau_p)\frac{ \bigr(\varpi^{n_{l}}_{{n'}_{l}}\bigl)^2}{\theta^{n_{l}}_{{n'}_{l}}+1} \geq \varphi_{n_{l}},\ \forall {n'}_{l} < n_{l},\ \forall n_{l} \in \mathcal{N}_l,\label{eq:op2c1e}
\end{subnumcases}
where $\mathcal{I}^{n_{l}}_{{n'}_{l}}(\hat{\boldsymbol{\rho}},\boldsymbol{\tau})\triangleq\sum\limits_{{n''}_{l}\in\mathcal{N}_{l}\setminus \{n_{l}\}} \eta_{{n''}_{l'}} (K-\tau_p) \bigr(\tau^{{n'}_{l}}_{{n''}_{l}}\bigl)^2 
+ \sum\limits_{{l'}\in\mathcal{L}} \sum\limits_{{n''}_{l'}\in\mathcal{N}_{l'}} \sum\limits_{m\in\mathcal{M}}$ $ \eta_{{n''}_{l'}} \bigr(\hat{\rho}^{m}_{{n''}_{l'}}\bigl)^2 \bigl(\beta_{m,{n'}_{l}}$ - $ \gamma_{m,{n'}_{l}}\bigr)$ is a quadratic function. Here, constraint \eqref{eq:op2c1e} remains nonconvex. We note that $ (\varpi^{n_{l}}_{{n'}_{l}})^2/(\theta^{n_{l}}_{{n'}_{l}}+1)$ is the quadratic-over-linear function, which is convex with respect to $(\varpi^{n_{l}}_{{n'}_{l}}, \theta^{n_{l}}_{{n'}_{l}})$. Let $(\varpi^{n_{l},(\kappa)}_{{n'}_{l}}, \theta^{n_{l},(\kappa)}_{{n'}_{l}})$ be a feasible point of $(\varpi^{n_{l}}_{{n'}_{l}}, \theta^{n_{l}}_{{n'}_{l}})$ at the $\kappa$-th iteration of an iterative algorithm and by the IA principle, constraint \eqref{eq:op2c1e}  can be convexified as
\begin{align}\label{eq:op2c1eConvex}
(K-\tau_p)\Bigl(\frac{2{\varpi^{n_{l},(\kappa)}_{{n'}_{l}}} }{{\theta^{n_{l},(\kappa)}_{{n'}_{l}}}+1}\varpi^{n_{l}}_{{n'}_{l}}-\frac{\bigl({\varpi^{n_{l},(\kappa)}_{{n'}_{l}}}\bigr)^2}{\bigl({\theta^{n_{l},(\kappa)}_{{n'}_{l}}}+1\bigr)^2}(\theta^{n_{l}}_{{n'}_{l}}+1)\Bigr) \geq \varphi_{n_{l}},\ \forall {n'}_{l} < n_{l},\ \forall n_{l} \in \mathcal{N}_l.
\end{align}

 Similarly, constraint \eqref{eq:op2d} can be iteratively approximated as
%\begin{subnumcases}{\label{eq:op2d1a} 
%	\eqref{eq:op2d} \Leftrightarrow} 
%\sum\limits_{m \in \mathcal{M}} \hat{\rho}^{m}_{n_{l}}\sqrt{ \gamma_{m,n_{l}}}  \geq \varpi^{n_{l}}_{n_{l}},\ \forall n_{l} \in \mathcal{N}_l, \IEEEyessubnumber\label{eq:op2d1b}\\
%\sum\limits_{m \in \mathcal{M}} \hat{\rho}^{m}_{{n''}_{l}}\sqrt{ \gamma_{m,n_{l}}} \leq \tau^{n_{l}}_{{n''}_{l}},\ \forall n_{l} \in \mathcal{N}_l,\IEEEyessubnumber\label{eq:op2d1c}\\
%\mathcal{I}^{n_{l}}_{{n}_{l}}(\hat{\boldsymbol{\rho}},\boldsymbol{\tau})
% \leq \theta^{n_{l}}_{n_{l}},\ \forall n_{l} \in \mathcal{N}_l,\qquad\label{eq:op2d1d}\\
%(K-\tau_p) \Bigl(\frac{2{\varpi^{n_{l},(\kappa)}_{n_{l}}} }{{\theta^{n_{l},(\kappa)}_{n_{l}}} +1}\varpi^{n_{l}}_{n_{l}}-\frac{\bigl({\varpi^{n_{l},(\kappa)}_{n_{l}}}\bigr)^2}{\bigl({\theta^{n_{l},(\kappa)}_{n_{l}}} + 1\bigr)^2}(\theta^{n_{l}}_{n_{l}}+1)\Bigr) \geq \varphi_{n_{l}},\ \forall n_{l} \in \mathcal{N}_l,\label{eq:op2d1e}
%\end{subnumcases}
\begin{IEEEeqnarray}{rCl} 
&&\sum\limits_{m \in \mathcal{M}} \hat{\rho}^{m}_{n_{l}}\sqrt{ \gamma_{m,n_{l}}}  \geq \varpi^{n_{l}}_{n_{l}},\ \forall n_{l} \in \mathcal{N}_l, \IEEEyessubnumber\label{eq:op2d1b}\\
&&\sum\limits_{m \in \mathcal{M}} \hat{\rho}^{m}_{{n''}_{l}}\sqrt{ \gamma_{m,n_{l}}} \leq \tau^{n_{l}}_{{n''}_{l}},\ \forall n_{l} \in \mathcal{N}_l,\IEEEyessubnumber\label{eq:op2d1c}\\
&&\mathcal{I}^{n_{l}}_{{n}_{l}}(\hat{\boldsymbol{\rho}},\boldsymbol{\tau})
\leq \theta^{n_{l}}_{n_{l}},\ \forall n_{l} \in \mathcal{N}_l,\qquad\IEEEyessubnumber\label{eq:op2d1d}\\
&&(K-\tau_p) \Bigl(\frac{2{\varpi^{n_{l},(\kappa)}_{n_{l}}} }{{\theta^{n_{l},(\kappa)}_{n_{l}}} +1}\varpi^{n_{l}}_{n_{l}}-\frac{\bigl({\varpi^{n_{l},(\kappa)}_{n_{l}}}\bigr)^2}{\bigl({\theta^{n_{l},(\kappa)}_{n_{l}}} + 1\bigr)^2}(\theta^{n_{l}}_{n_{l}}+1)\Bigr) \geq \varphi_{n_{l}},\ \forall n_{l} \in \mathcal{N}_l,\IEEEyessubnumber\label{eq:op2d1e}
\end{IEEEeqnarray}
where $\mathcal{I}^{n_{l}}_{{n}_{l}}(\hat{\boldsymbol{\rho}},\boldsymbol{\tau})
\triangleq\sum\limits_{{n''}_{l}\in\mathcal{N}_{l}\setminus \{n_{l}\}} \eta_{{n''}_{l'}} (K-\tau_p) \left(\tau^{n_{l}}_{{n''}_{l}}\right)^2 + \sum\limits_{{l'} \in \mathcal{L}} \sum\limits_{{n''}_{l'} \in \mathcal{N}_{l'}} \sum\limits_{m \in \mathcal{M}} \eta_{{n''}_{l'}} \left(\hat{\rho}^{m}_{{n''}_{l'}}\right)^2 \left(\beta_{m,n_{l}}-\gamma_{m,n_{l}}\right)$.

In summary, the convex approximate program of \eqref{eq:op2} solved at iteration $\kappa+1$ is given as
\begin{subequations}\label{eq:convexprogramlog}
	\begin{align}
	& \underset{\hat{\boldsymbol{\rho}}, \mathbf{r}, \boldsymbol{\varphi},\boldsymbol{\varpi},\boldsymbol{\tau},\boldsymbol{\theta} }{\max} 
	\quad \Bigl(1-\frac{\tau_p}{\tau_c}\Bigr) \sum\limits_{l \in \mathcal{L}} \sum\limits_{n_{l} \in \mathcal{N}_{l}} r_{n_{l}} \label{eq:eq:convexprogramlog:a}\\ 
	&\quad\ \textsf{s.t.} \qquad  \eqref{eq:op2b}, \eqref{eq:op2c1b}{-}\eqref{eq:op2c1d},
	  \eqref{eq:op2c1eConvex}, 
	   \eqref{eq:op2d1b}{-}\eqref{eq:op2d1e},\label{eq:eq:convexprogramlog:b}\\
	&\qquad\qquad\    \sum\limits_{l \in \mathcal{L}} \sum\limits_{n_{l} \in \mathcal{N}_l} (\hat{\rho}^m_{n_{l}})^2 \leq P^{m}_{\text{max}}, \forall m \in \mathcal{M},\label{eq:eq:convexprogramlog:c}\\ 
	&\qquad\qquad\   \hat{\rho}^m_{n_l} \leq \hat{\rho}^m_{n_l+1}, n_l \in \left[1,N_l-1\right], \forall m \in \mathcal{M},  l \in \mathcal{L}. \label{eq:eq:convexprogramlog:d}  
	\end{align}
\end{subequations}	

\textit{Conic Quadratic Program}: Problem \eqref{eq:convexprogramlog} is a mix of exponential and quadratic constraints, resulting in  a generic convex program. The major complexity in solving such a program is due to the logarithm function in \eqref{eq:op2b}. Therefore, the use of modern convex solvers (e.g., SeDuMi \cite{Sturm} and MOSEK \cite{MOSEK}) becomes less efficient than standard ones. To bypass this issue, we use a lower bound of $\ln\bigl(1+\varphi_{n_{l}}\bigr)$ as \cite[Eq. (66)]{Dinh:JSAC:Dec2017}
\begin{align}\label{eq:op2blinear}
\ln\bigl(1+\varphi_{n_{l}}\bigr) \geq  \ln(1+\varphi^{(\kappa)}_{n_{l}})+\frac{\varphi^{(\kappa)}_{n_{l}}}{\varphi^{(\kappa)}_{n_{l}}+1} -\frac{(\varphi^{(\kappa)}_{n_{l}})^2}{\varphi^{(\kappa)}_{n_{l}}+1}\frac{1}{\varphi_{n_{l}}},\ \forall \varphi^{(\kappa)}_{n_{l}} > 0,  \varphi_{n_{l}} > 0,
\end{align}	
which is a concave function. We note that \eqref{eq:op2blinear} holds with equality at the optimum, i.e.,  $\varphi^{(\kappa)}_{n_{l}}=\varphi^{(\kappa+1)}_{n_{l}}$.
 Next, by introducing new variables $\bar{\boldsymbol{\varphi}}\triangleq \{\bar{\varphi}_{n_{l}}\}_{\forall n_l}$, the conic quadratic approximate program of \eqref{eq:convexprogramlog} is given as
 \begin{subequations}\label{eq:convexprogramCQP}
 	\begin{align}
 	& \underset{\hat{\boldsymbol{\rho}}, \mathbf{r}, \boldsymbol{\varphi},\bar{\boldsymbol{\varphi}},\boldsymbol{\varpi},\boldsymbol{\tau},\boldsymbol{\theta} }{\max} 
 	\quad \Bigl(1-\frac{\tau_p}{\tau_c}\Bigr) \sum\limits_{l \in \mathcal{L}} \sum\limits_{n_{l} \in \mathcal{N}_{l}} r_{n_{l}} \label{eq:eq:convexprogramCQP:a}\\ 
 	&\quad\ \textsf{s.t.} \qquad  \eqref{eq:op2c1b}{-}\eqref{eq:op2c1d},
 	\eqref{eq:op2c1eConvex}, 
 	\eqref{eq:op2d1b}{-}\eqref{eq:op2d1e},\eqref{eq:eq:convexprogramlog:c}, \eqref{eq:eq:convexprogramlog:d}, \label{eq:eq:convexprogramCQP:b}\\
 	&\qquad\qquad\  \mathcal{F}^{(\kappa)}(\varphi^{(\kappa)}_{n_{l}},\bar{\varphi}_{n_{l}}) \geq r_{n_{l}}\ln2,\ \forall n_{l} \in \mathcal{N}_l,\label{eq:eq:convexprogramCQP:c}\\
 	&\qquad\qquad\ 0.25\left(\varphi_{n_{l}}+\bar{\varphi}_{n_{l}}\right)^2 \geq 0.25\left(\varphi_{n_{l}}-\bar{\varphi}_{n_{l}}\right)^2  + 1,\ \forall n_{l} \in \mathcal{N}_l,\label{eq:eq:convexprogramCQP:d}
 	\end{align}
 \end{subequations}
 where $\mathcal{F}^{(\kappa)}(\varphi^{(\kappa)}_{n_{l}},\bar{\varphi}_{n_{l}})\triangleq \ln(1+\varphi^{(\kappa)}_{n_{l}})+\frac{\varphi^{(\kappa)}_{n_{l}}}{\varphi^{(\kappa)}_{n_{l}}+1} -\frac{(\varphi^{(\kappa)}_{n_{l}})^2}{\varphi^{(\kappa)}_{n_{l}}+1}\bar{\varphi}_{n_{l}}$. We note that  \eqref{eq:eq:convexprogramCQP:d} is a 
 second-order cone constraint and must hold with equality at the optimum. The proposed IA-based iterative algorithm is summarized in Algorithm \ref{alg_IterativeAlgorithm}.

\begin{algorithm}
	\begin{algorithmic}[1]
		\protect\caption{Proposed IA-based Iterative Algorithm to Solve Problem \eqref{eq:opp}.}
		\label{alg_IterativeAlgorithm}
		\global\long\def\algorithmicrequire{\textbf{Initialization:}}
		\REQUIRE  Set $\kappa:=0$ and   generate an initial feasible point $(\boldsymbol{\varpi}^{(0)},{\boldsymbol{\theta}}^{(0)},\boldsymbol{\varphi}^{(0)})$.
		\REPEAT
		\STATE Solve the conic quadratic approximate program \eqref{eq:convexprogramCQP} to obtain the optimal solution, denoted by ($\hat{\boldsymbol{\rho}}^{\star}, \mathbf{r}^{\star}, \boldsymbol{\varphi}^{\star},\bar{\boldsymbol{\varphi}}^{\star},\boldsymbol{\varpi}^{\star},\boldsymbol{\tau}^{\star},\boldsymbol{\theta}^{\star} $);
		\STATE Update ($ \boldsymbol{\varphi}^{(\kappa+1)},\boldsymbol{\varpi}^{(\kappa+1)},\boldsymbol{\theta}^{(\kappa+1)} ):=( \boldsymbol{\varphi}^{\star},\boldsymbol{\varpi}^{\star},\boldsymbol{\theta}^{\star} $);
		\STATE Set $\kappa:=\kappa+1;$
		\UNTIL Convergence, i.e., $\Bigl( \sum\limits_{l \in \mathcal{L}} \sum\limits_{n_{l} \in \mathcal{N}_{l}} r_{n_{l}}^{(\kappa)}- \sum\limits_{l \in \mathcal{L}} \sum\limits_{n_{l} \in \mathcal{N}_{l}} r_{n_{l}}^{(\kappa-1)}\Bigr)\Bigl/ \sum\limits_{l \in \mathcal{L}} \sum\limits_{n_{l} \in \mathcal{N}_{l}} r_{n_{l}}^{(\kappa-1)} < \epsilon$\\
		\STATE \textbf{Ouput}: $\boldsymbol{\rho}^{\star}$ with $\rho^{m,(\star)}_{n_{l}}=(\hat{\rho}^{m,(\star)}_{n_{l}})^2, \forall n_l \in \mathcal{N}_l$.
	\end{algorithmic} 
\end{algorithm}

\textit{Convergence and Complexity Analysis:} The proposed algorithm  starts by randomly generating an initial feasible point for the updated variables $(\boldsymbol{\varpi}^{(0)},{\boldsymbol{\theta}}^{(0)},\boldsymbol{\varphi}^{(0)})$. In each iteration, we solve the convex program \eqref{eq:convexprogramCQP} to produce the next feasible point $(\boldsymbol{\varphi}^{(\kappa+1)},\boldsymbol{\varpi}^{(\kappa+1)},\boldsymbol{\theta}^{(\kappa+1)} )$. This procedure is successively repeated until convergence, which is stated in the following
proposition.

\begin{proposition}\label{pro:1} Initialized from a feasible point $(\boldsymbol{\varpi}^{(0)},{\boldsymbol{\theta}}^{(0)},\boldsymbol{\varphi}^{(0)})$,
Algorithm \ref{alg_IterativeAlgorithm} produces a sequence \{$ \boldsymbol{\varphi}^{(\kappa)},\boldsymbol{\varpi}^{(\kappa)},\boldsymbol{\theta}^{(\kappa)} \}$ of improved solutions to problem \eqref{eq:convexprogramCQP}, which satisfy the Karush-Kuhn-Tucker (KKT)
conditions. In light of the IA principles, the sequence  $\Bigl\{\bigl(1-\frac{\tau_p}{\tau_c}\bigr) \sum\limits_{l \in \mathcal{L}} \sum\limits_{n_{l} \in \mathcal{N}_{l}} r_{n_{l}}^{(\kappa)}\Bigr\}_{\kappa=1}^{\infty}$ is monotonically
increasing and converges after a
finite number of iterations for a given error tolerance $\epsilon >0$.
\end{proposition}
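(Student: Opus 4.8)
The plan is to reduce the claim to the generic convergence theorem for the inner approximation (IA) framework \cite{Marks:78,Beck:JGO:10}, whose hypotheses I would verify for every constraint of \eqref{eq:op2} that was convexified in forming \eqref{eq:convexprogramCQP}: the tangent-plane linearizations of the quadratic-over-linear terms in \eqref{eq:op2c1eConvex} and \eqref{eq:op2d1e}, and the logarithmic lower bound \eqref{eq:op2blinear} paired with the second-order cone reformulation \eqref{eq:eq:convexprogramCQP:d}. For each surrogate I would check the three defining properties: (i) it is a lower bound (for a ``$\ge$'' constraint) of its exact counterpart, so the feasible set of \eqref{eq:convexprogramCQP} is contained in that of \eqref{eq:op2}; (ii) at the current iterate the surrogate agrees with the exact function in value \emph{and} gradient, which I would verify directly in each case and which is automatic for the tangent-plane linearizations; and (iii) the iterate produced at step $\kappa$ is feasible for the program solved at step $\kappa+1$. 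Property (iii) is in fact a consequence of (i)--(ii): since the surrogates are re-centered at the iterate they just produced, that iterate satisfies each of them with equality, so \eqref{eq:convexprogramCQP} is always solvable. I would also first record that \eqref{eq:op2blinear} and \eqref{eq:eq:convexprogramCQP:d} hold with equality at optimum (the objective pushes $\bar{\varphi}_{n_l}$ down to $1/\varphi_{n_l}$), so these reformulations are tight.

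Monotonicity then follows in one line: the optimal solution of \eqref{eq:convexprogramCQP} at iteration $\kappa$ is, by (iii), feasible for \eqref{eq:convexprogramCQP} at iteration $\kappa+1$, hence the optimal value cannot decrease and $\bigl(1-\tfrac{\tau_p}{\tau_c}\bigr)\sum_{l}\sum_{n_l} r_{n_l}^{(\kappa)}$ is nondecreasing in $\kappa$. For an upper bound, I would use \eqref{eq:op1b}: the per-AP power budgets confine $\boldsymbol{\rho}$ (equivalently $\hat{\boldsymbol{\rho}}$) to a compact set, so every $\mathrm{SINR}^{n_l}_{{n'}_l}$ and $\mathrm{SINR}^{n_l}_{n_l}$ — and hence each $r_{n_l}$ and the objective — is bounded above. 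A monotone sequence bounded above converges; consequently the relative-progress test with tolerance $\epsilon>0$ is met after finitely many iterations, which gives the ``finite number of iterations'' statement.

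For the KKT claim I would pass to a convergent subsequence of the iterates (it exists by the same compactness) with limit $(\hat{\boldsymbol{\rho}}^{\star},\mathbf{r}^{\star},\boldsymbol{\varphi}^{\star},\dots)$ and take limits in the KKT system of \eqref{eq:convexprogramCQP}. Because at the fixed point $\boldsymbol{\varphi}^{(\kappa)}=\boldsymbol{\varphi}^{(\kappa+1)}$ property (ii) forces the surrogate constraints and their gradients to coincide with those of the exact constraints of \eqref{eq:op2}, the limiting stationarity and complementary-slackness relations are precisely the KKT conditions of \eqref{eq:op2}, hence of \eqref{eq:opp} through Lemma \ref{lemma2}. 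The boundedness of the multiplier sequence that makes this limit legitimate is supplied by a constraint qualification; I would note that Slater's condition holds for \eqref{eq:convexprogramCQP} (scale all powers strictly below $P^m_{\text{max}}$), so the Mangasarian--Fromovitz CQ is in force uniformly along the sequence. I expect this last step to be the main obstacle: the monotonicity and convergence of the objective are routine once the three IA properties are established, whereas certifying that the accumulation point is a genuine KKT point of the \emph{original} nonconvex problem requires the gradient-consistency property to be tracked carefully through the composite log-plus-SOC reformulation and a uniform CQ so the dual iterates do not blow up.
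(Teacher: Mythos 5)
Your proposal is correct and follows essentially the same route as the paper's proof in Appendix B: establish tightness of the surrogates at the current iterate, deduce that each iterate is feasible for (and improves) the next subproblem, bound the objective via the power constraints, and invoke the standard IA convergence result of Marks and Wright for the KKT property. The only difference is one of detail — you explicitly unpack the constraint-qualification and subsequence argument that the paper delegates entirely to the citation of \cite[Theorem 1]{Marks:78}.
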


\begin{proof}
	Please see Appendix \ref{app:B}.
\end{proof}

The computational complexity of Algorithm \ref{alg_IterativeAlgorithm} mainly depends on solving the approximate problem \eqref{eq:convexprogramCQP}, which is  polynomial in the number of
constraints and optimization variables. Problem \eqref{eq:convexprogramCQP} has $v=NM+3N+3\sum_{l=1}^L\frac{N_l(N_l-1)}{2}$ scalar real  variables and $c=8\sum_{l=1}^L\bigr(\frac{N_l(N_l-1)}{2}+M(N_l-1)\bigl)+M$ quadratic and linear
constraints. As a result, the worst-case computational cost of Algorithm \ref{alg_IterativeAlgorithm} in each iteration is $\mathcal{O}(v^2c^{2.5}+c^{3.5})$.

\section{Collocated massive MIMO-NOMA system} \label{sec:co}
In this section, we consider a COmMIMO-NOMA system, which serves as a benchmark for CFmMIMO-NOMA. The main differences between CFmMIMO-NOMA and COmMIMO-NOMA systems are as follows: $i$) in CFmMIMO-NOMA, in general $\beta_{m,n_{l}}$ $\neq$ $\beta_{{m'},n_{l}}$, for $m \neq {m'}$, whereas in COmMIMO-NOMA, $\beta_{m,n_{l}}$ $=$ $\beta_{{m'},n_{l}}$; and $ii$) in CFmMIMO-NOMA, a power constraint is applied at each AP individually, whereas in COmMIMO-NOMA, a total power constraint is applied at the collocated AP  equipped with $MK$ antennas. Unless otherwise specified, all notations and symbols given in the previous sections will be reused in this section.  

\subsection{Performance Analysis}
Similar to Lemma \ref{lemma1}, the closed-form expression for the SE of UE ${n_{l}}$ in the $l$-{th} cluster is given by
\begin{align}\label{eq:Rate1}
R^{\col}_{n_{l}} &= \Bigl(1-\frac{\tau_p}{\tau_c}\Bigr) \text{log}_{2}\bigl(1+\text{SINR}^{\col}_{n_{l}}\bigl) \nonumber \\
& = \Bigl(1-\frac{\tau_p}{\tau_c}\Bigl) \text{log}_{2}\bigl(1+\min_{{n'}_{l} = 1,\ldots,n_{l}}\text{SINR}^{n_{l},\col}_{{n'}_{l}}\bigl),\  \forall n_l \in \mathcal{N}_l. 
\end{align}

By replacing $\rho_{n_{l}}^m$ with $\rho_{n_{l}}, \forall n_l$,  $\text{SINR}^{n_{l},\col}_{n_{l}}$ and $\text{SINR}^{n_{l},\col}_{{n'}_{l}}$, $\forall {n'}_{l} < n_{l}$, are derived as follows:
\begin{align}\label{eq:SINR3}
\text{SINR}^{n_{l},\col}_{n_{l}} = \frac{(K-\tau_p) \rho_{n_{l}} \gamma_{n_{l}}}{\mathcal{I}^{n_{l}}_{{n}_{l}}(\boldsymbol{\rho}) + 1},\quad \text{and}\quad
%\end{align}
%\begin{align}\label{eq:SINR4}
\text{SINR}^{n_{l},\col}_{{n'}_{l}} = \frac{(K-\tau_p) \rho_{n_{l}} \gamma_{{n'}_{l}}}{\mathcal{I}^{n_{l}}_{{n'}_{l}}(\boldsymbol{\rho}) + 1},
\end{align}
where 
\begin{align}
\mathcal{I}^{n_{l}}_{{n}_{l}}(\boldsymbol{\rho})&\triangleq \sum\limits_{{n''}_{l}\in\mathcal{N}_{l}\setminus \{n_{l}\}} \eta_{{n''}_{l'}} (K-\tau_p) \rho_{{n''}_{l}} \gamma_{n_{l}} + \sum\limits_{{l'} \in \mathcal{L}} \sum\limits_{{n''}_{l'} \in \mathcal{N}_{l'}} \eta_{{n''}_{l'}} \rho_{{n''}_{l'}} \left(\beta_{n_{l}}-\gamma_{n_{l}}\right),\\
\mathcal{I}^{n_{l}}_{{n'}_{l}}(\boldsymbol{\rho})&\triangleq \sum\limits_{{n''}_{l}\in\mathcal{N}_{l}\setminus \{n_{l}\}} \eta_{{n''}_{l'}} (K-\tau_p) \rho_{{n''}_{l}} \gamma_{{n'}_{l}} + \sum\limits_{{l'} \in \mathcal{L}} \sum\limits_{{n''}_{l'} \in \mathcal{N}_{l'}} \eta_{{n''}_{l'}} \rho_{{n''}_{l'}} \left(\beta_{{n'}_{l}}-\gamma_{{n'}_{l}}\right),
\end{align}
and
 $\ds\gamma_{n_{l}} = \frac{\tau_p \rho_{n_l} \beta^{2}_{n_{l}}}{\tau_p \sum\limits_{{n'}_{l} \in \mathcal{N}_{l}} \rho_{{n'}_l} \beta_{{n'}_{l}} + 1}$;  $\eta_{{n''}_{l'}}$ is defined as 
\begin{equation}
\eta_{{n''}_{l'}} =
\begin{cases}
1, & \text{if} \ {l'} \ \neq \ l \ \text{or} \ {l'} \ = \ l \ \text{and} \ {n''}_{l} \ \leq \ {n}_{l},\\
\zeta_{{n}_{l}}, & \text{otherwise}.\\
\end{cases}       
\end{equation}

The SSE of  COmMIMO-NOMA system is expressed as follows:
\begin{align}
R^{\col}_{\Sigma} = \sum\limits_{l \in \mathcal{L}} \sum\limits_{n_{l} \in \mathcal{N}_{l}} R^{\col}_{n_{l}}= \Bigl(1-\frac{\tau_p}{\tau_c}\Bigr) \text{log}_{2}\bigl(1+\text{SINR}^{\col}_{n_{l}}\bigl).
\end{align}

The SSE maximization problem for COmMIMO-NOMA is stated as
\begin{subequations}\label{eqCol:opp}
	\begin{align}
	\underset{\boldsymbol{\rho}}{\max} \label{eqCol:op1a}
	& \quad\Bigl(1-\frac{\tau_p}{\tau_c}\Bigr) \sum\limits_{l \in \mathcal{L}} \sum\limits_{n_{l} \in \mathcal{N}_{l}}  \log_{2}\bigl(1+\mathrm{SINR}^{\col}_{n_{l}}\bigl) \\ \label{eqCol:op1b}
	\mathsf{s.t.}&\quad    \sum\limits_{l \in \mathcal{L}} \sum\limits_{n_{l} \in \mathcal{N}_l} \rho_{n_{l}} \leq P_{\max},\\ \label{eqCol:op1c}
	&\quad   \rho_{n_l} \leq \rho_{n_l+1}, n_l \in \left[1,N_l-1\right], \forall  l \in \mathcal{L}.
	\end{align}
\end{subequations}

\subsection{Proposed Solution to Problem \eqref{eqCol:opp} }
 
By making the change of variable as $\rho_{n_{l}}=(\hat{\rho}_{n_{l}})^2, \forall n_l \in \mathcal{N}_l$ and following similar steps from  \eqref{eq:op2} to \eqref{eq:convexprogramlog}, problem \eqref{eqCol:opp} is equivalently transformed to the following tractable form
\begin{subequations}\label{eqCol:convexprogramlog}
	\begin{align}
	& \underset{\hat{\boldsymbol{\rho}}, \mathbf{r}, \boldsymbol{\varphi},\boldsymbol{\theta} }{\max} 
	\quad \Bigl(1-\frac{\tau_p}{\tau_c}\Bigr) \sum\limits_{l \in \mathcal{L}} \sum\limits_{n_{l} \in \mathcal{N}_{l}} r_{n_{l}} \label{eqCol:eq:convexprogramlog:a}\\ 
	&\quad\ \textsf{s.t.} \quad\  \ln\left(1+\varphi_{n_{l}}\right) \geq r_{n_{l}}\ln2,\ \forall n_{l} \in \mathcal{N}_l,  \label{eqCol:eq:convexprogramlog:b}\\
		&\qquad\qquad\ \mathcal{I}^{n_{l}}_{{n'}_{l}}(\hat{\boldsymbol{\rho}}) \leq \theta^{n_{l}}_{{n'}_{l}},\ \forall {n'}_{l} < n_{l},\ \forall n_{l} \in \mathcal{N}_l  ,\label{eqCol:eq:convexprogramlog:b1}\\
		&\qquad\qquad\ \mathcal{I}^{n_{l}}_{{n}_{l}}(\hat{\boldsymbol{\rho}}) \leq \theta^{n_{l}}_{{n}_{l}},\  \forall n_{l} \in \mathcal{N}_l  ,\label{eqCol:eq:convexprogramlog:b2}\\	
	&\qquad\qquad\   \frac{(K-\tau_p) (\hat{\rho}_{n_{l}})^2 \gamma_{n'_{l}}}{\theta^{n_{l}}_{{n'}_{l}}+ 1} \geq \varphi_{n_{l}},\ \forall {n'}_{l} < n_{l},\ \forall n_{l} \in \mathcal{N}_l, \label{eqCol:eq:convexprogramlog:b3}\\	
		&\qquad\qquad\   \frac{(K-\tau_p) (\hat{\rho}_{n_{l}})^2 \gamma_{n_{l}}}{\theta^{n_{l}}_{{n}_{l}} + 1} \geq \varphi_{n_{l}},\  \forall n_{l} \in \mathcal{N}_l	,\label{eqCol:eq:convexprogramlog:b4}\\	
	&\qquad\qquad\    \sum\limits_{l \in \mathcal{L}} \sum\limits_{n_{l} \in \mathcal{N}_l} (\hat{\rho}_{n_{l}})^2 \leq P_{\max},\label{eqCol:eq:convexprogramlog:c}\\ 
	&\qquad\qquad\   \hat{\rho}_{n_l} \leq \hat{\rho}_{n_l+1}, n_l \in \left[1,N_l-1\right], \forall   l \in \mathcal{L}, \label{eqCol:eq:convexprogramlog:d}  
	\end{align}
\end{subequations}
where
\begin{align}
\mathcal{I}^{n_{l}}_{{n'}_{l}}(\hat{\boldsymbol{\rho}})&\triangleq \sum\limits_{{n''}_{l}\in\mathcal{N}_{l}\setminus \{n_{l}\}} \eta_{{n''}_{l'}} (K-\tau_p) (\hat{\rho}_{{n''}_{l}})^2 \gamma_{{n'}_{l}} + \sum\limits_{{l'} \in \mathcal{L}} \sum\limits_{{n''}_{l'} \in \mathcal{N}_{l'}} \eta_{{n''}_{l'}} (\hat{\rho}_{{n''}_{l'}})^2 \left(\beta_{{n'}_{l}}-\gamma_{{n'}_{l}}\right),\nonumber\\
\mathcal{I}^{n_{l}}_{{n}_{l}}(\hat{\boldsymbol{\rho}})&\triangleq  \sum\limits_{{n''}_{l}\in\mathcal{N}_{l}\setminus \{n_{l}\}} \eta_{{n''}_{l'}} (K-\tau_p) (\hat{\rho}_{{n''}_{l}})^2 \gamma_{n_{l}} + \sum\limits_{{l'} \in \mathcal{L}} \sum\limits_{{n''}_{l'} \in \mathcal{N}_{l'}} \eta_{{n''}_{l'}} (\hat{\rho}_{{n''}_{l'}})^2 \left(\beta_{n_{l}}-\gamma_{n_{l}}\right).   \nonumber
\end{align}

The nonconvex constraints are \eqref{eqCol:eq:convexprogramlog:b3} and \eqref{eqCol:eq:convexprogramlog:b4}. Let $(\hat{\rho}_{n_{l}}^{(\kappa)},\theta^{n_{l},(\kappa)}_{{n}_{l}})$ be a feasible point of $(\hat{\rho}_{n_{l}},\theta^{n_{l}}_{{n}_{l}})$ at iteration $\kappa$. By \eqref{eq:op2blinear}, 
 the conic quadratic approximate program for solving \eqref{eqCol:convexprogramlog} is given as
\begin{subequations}\label{eqCol:convexprogramCQP}
	\begin{align}
	& \underset{\hat{\boldsymbol{\rho}}, \mathbf{r}, \boldsymbol{\varphi},\bar{\boldsymbol{\varphi}},\boldsymbol{\theta} }{\max} 
	\quad \Bigl(1-\frac{\tau_p}{\tau_c}\Bigr) \sum\limits_{l \in \mathcal{L}} \sum\limits_{n_{l} \in \mathcal{N}_{l}} r_{n_{l}} \label{eqCol:eq:convexprogramCQP:a}\\ 
	&\quad\ \textsf{s.t.} \quad\  \eqref{eq:eq:convexprogramCQP:c}, \eqref{eq:eq:convexprogramCQP:d}, \eqref{eqCol:eq:convexprogramlog:b1}, \eqref{eqCol:eq:convexprogramlog:b2}, \eqref{eqCol:eq:convexprogramlog:c}, \eqref{eqCol:eq:convexprogramlog:d}, \\
	&\qquad\qquad\  (K-\tau_p)\gamma_{n'_{l}}\mathcal{G}^{(\kappa)}(\hat{\rho}_{n_{l}},\theta^{n_{l}}_{{n'}_{l}}) \geq \varphi_{n_{l}},\ \forall {n'}_{l} < n_{l},\ \forall n_{l} \in \mathcal{N}_l,\label{eqCol:eq:convexprogramCQP:c}\\
    &\qquad\qquad\  (K-\tau_p)\gamma_{n_{l}}\mathcal{G}^{(\kappa)}(\hat{\rho}_{n_{l}},\theta^{n_{l}}_{{n}_{l}}) \geq \varphi_{n_{l}},\  \forall n_{l} \in \mathcal{N}_l,\label{eqCol:eq:convexprogramCQP:d}
	\end{align}
\end{subequations}
where $\ds \mathcal{G}^{(\kappa)}(\hat{\rho}_{n_{l}},\theta^{n_{l}}_{{n'}_{l}})\triangleq \frac{2\hat{\rho}_{n_{l}}^{(\kappa)} }{{\theta^{n_{l},(\kappa)}_{{n'}_{l}}}+1}\hat{\rho}_{n_{l}} -\frac{\bigl(\hat{\rho}_{n_{l}}^{(\kappa)}\bigr)^2}{\bigl({\theta^{n_{l},(\kappa)}_{{n'}_{l}}}+1\bigr)^2}(\theta^{n_{l}}_{{n'}_{l}}+1)$ and $\ds \mathcal{G}^{(\kappa)}(\hat{\rho}_{n_{l}},\theta^{n_{l}}_{{n}_{l}})\triangleq \frac{2\hat{\rho}_{n_{l}}^{(\kappa)} }{{\theta^{n_{l},(\kappa)}_{{n}_{l}}}+1}\hat{\rho}_{n_{l}} -\frac{\bigl(\hat{\rho}_{n_{l}}^{(\kappa)}\bigr)^2}{\bigl({\theta^{n_{l},(\kappa)}_{{n}_{l}}}+1\bigr)^2}(\theta^{n_{l}}_{{n}_{l}}+1)$. The solution to problem \eqref{eqCol:opp} can be found by using Algorithm \ref{alg_IterativeAlgorithm}, in which we replace problem \eqref{eq:convexprogramCQP} by problem \eqref{eqCol:convexprogramCQP} in Step 2. The worst-case computational complexity of solving \eqref{eqCol:convexprogramCQP} in each iteration is $\mathcal{O}(\bar{v}^2\bar{c}^{2.5}+\bar{c}^{3.5})$, where $\bar{v}=4N+\sum_{l=1}^L\frac{N_l(N_l-1)}{2}$  and $\bar{c}=\sum_{l=1}^L\bigl(N_l(N_l-1)+\frac{(N_l-1)^2}{2}\bigl)+2N+1$ are scalar real  variables and 
constraints, respectively.

%\begin{algorithm}
%	\begin{algorithmic}[1]
%		\protect\caption{Proposed iterative algorithm to solve \eqref{eq:op4}}
%		\label{alg_IterativeAlgorithm2}
%		\global\long\def\algorithmicrequire{\textbf{Initialization:}}
%		\REQUIRE  Set $\kappa:=0$ and generate an initial feasible point\hfill\break $\left({\varpi^{n_{l},col}_{{n'}_{l}}}^{(0)},{\theta^{n_{l},col}_{{n'}_{l}}}^{(0)},{\varpi^{n_{l},col}_{n_{l}}}^{(0)},{\theta^{n_{l},col}_{n_{l}}}^{(0)},{\varphi^{col}_{n_{l}}}^{(0)}\right)$.
%		\REPEAT
%		\STATE Solve the convex program \eqref{eq:convexprogram1} to obtain the optimal solution: $\mathcal{S}^{\star}$.
%		\STATE Update $\left({\varpi^{n_{l},col}_{{n'}_{l}}}^{(\kappa+1)},{\theta^{n_{l},col}_{{n'}_{l}}}^{(\kappa+1)},{\varpi^{n_{l},col}_{n_{l}}}^{(\kappa+1)},{\theta^{n_{l},col}_{n_{l}}}^{(\kappa+1)},{\varphi^{col}_{n_{l}}}^{(\kappa+1)}\right)\hfill\break
%		:=\left({\varpi^{n_{l},col}_{{n'}_{l}}}^{\star},{\theta^{n_{l},col}_{{n'}_{l}}}^{\star},{\varpi^{n_{l},col}_{n_{l}}}^{\star},{\theta^{n_{l},col}_{n_{l}}}^{\star},{\varphi^{col}_{n_{l}}}^{\star}\right).$
%		\STATE Set $\kappa:=\kappa+1.$
%		\UNTIL Convergence\\
%	\end{algorithmic} 
%\end{algorithm}	

\section{Numerical Results}\label{sec}
We now quantitatively assess the performance of the proposed unsupervised ML-based UC algorithms in CFmMIMO-NOMA system. 
\begin{table}[t]
	\begin{minipage}{0.5\linewidth}
		\centering
		\fontsize{11}{12}\selectfont
		\captionof{table}{Simulation Parameters.}
		\label{tab: parameter}
		\vspace{-5pt}
		\scalebox{1}{
			\begin{tabular}{l|l}
				\hline
				Parameter & Value \\
				\hline\hline
				Reference distances ($d_0$, $d_{1}$) & (10,50) m \\
				System bandwidth ($B$) & 20 MHz \\
				Number of APs ($M$) & 32 \\ 
				Number of UEs ($N$) & 10 \\
				Number of antennas per AP ($K$) & 8\\ 
				Total power budget for all APs & 40 dBm \\
				Power budget at UEs & 23 dBm \\
				Noise power at receivers & -104 dBm \\
				SIC performance coefficient at UEs & 0.05 \\
				\hline		   				
			\end{tabular}
		}
	\end{minipage}
	\hfill
	\begin{minipage}{0.44\linewidth}
		\centering
		\includegraphics[width=1.2\columnwidth,trim={0cm 0cm 0cm 0cm}]{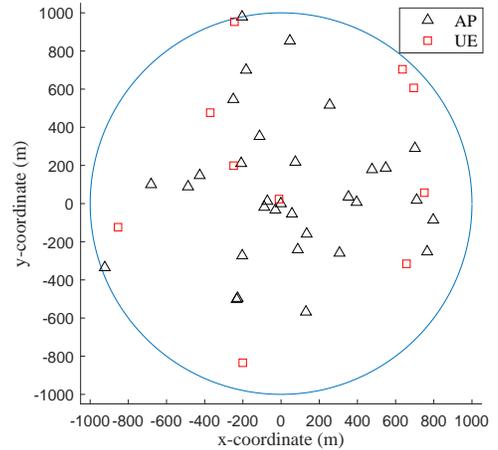}
		\vspace{-10pt}
		\captionof{figure}{A system topology with  $M=32$ APs and $N=10$ UEs is used in numerical examples.}
		\label{fig: Layout}
	\end{minipage}
\end{table}
\subsection{Simulation Parameters}
A CFmMIMO-NOMA system including $M=32$ APs and $N=10$ UEs is considered as shown in Fig. \ref{fig: Layout}, where all APs and UEs are uniformly distributed within a circular region with a radius of 1 km. The large-scale fading of all channels is modeled as \cite{Ngo:TWC:Mar2017} $\ds\beta_{m,n_{l}} =\ 10^{\frac{{\rm{PL}}(d_{m,n_{l}})+\sigma_{sh}z}{10}}$,\ $\forall m \in \mathcal{M}$, $n_{l} \in \mathcal{N}_{l}$, where $d_{m,n_{l}}$ is the distance from  AP$_{m}$ to UE ${n_{l}}$. The shadow fading is modeled as an RV $z$, which follows $\mathcal{CN} (0,1)$  with standard deviation $\sigma_{sh}$ = 8 dB. The three-slope path loss model is considered  as \cite{Ngo:TWC:Mar2017,TangVTC01,HieuJSAC2020}
\begin{align}\label{eq:PL}
{\rm{PL}}(d_{m,n_{l}}) &= -140.7 -35 {\rm{log}}_{10} (d_{m,n_{l}}) + 20 a_0 {\rm{log}}_{10} \Bigl(\frac{d_{m,n_{l}}}{d_0}\Bigr) + 15 a_1 {\rm{log}}_{10} \Bigl(\frac{d_{m,n_{l}}}{d_1}\Bigr),
\end{align}
where $d_j$, with $j = \left\{0,1\right\}$, represents the reference distance and $a_j = {{\rm{max}}\left\{0,\frac{d_i - d_{m,n_{l}}}{|d_i - d_{m,n_{l}}|}\right\}}$. Note that ${\rm{PL}}(d_{m,n_{l}})$ in \eqref{eq:PL} is measured in dB, while all distances are in km. Unless otherwise stated, other key parameters are shown in Table \ref{tab: parameter}, where all APs are assumed to have the same power budget \cite{HieuJSAC2020,Ngo:TWC:Mar2017}. The used convex solver is SeDuMi \cite{Sturm} in the MATLAB environment.

\subsection{Selection of the Number of Clusters L}
\begin{table}[hbt!]
		\centering
		\captionof{table}{Silhouette Score for CFmMIMO-NOMA and COmMIMO-NOMA.}
		\label{table1}
		\vspace{-5pt}
		\scalebox{0.82}{
	\begin{tabular}{c|c|c|c|c|c|c|c|c|c}
	\hline
	\multicolumn{2}{l|}{ Number of clusters $L$
	} & 2 & 3 & 4 & 5 & 6 & 7 & 8 & 9  \\ \hline
	\multirow{2}{*}{Silhouette Score}  & CFmMIMO-NOMA & 0.86 & 0.08 & 0.52 & \textbf{0.97} & 0.23 & 0.36 & 0.52 & 0.89  \\ \cline{2-10} 
	& COmMIMO-NOMA  & 0.90 & 0.85 & 0.67 & \textbf{0.99} & 0.65 & 0.78 & 0.88 & 0.93 \\ \hline
	\end{tabular}}
\end{table}

The performance of the k-means based UC algorithms is highly affected by the value of number of clusters $L$ \cite{CuiTWC2018,PalouGC2018}. Thus, it is essential to investigate the particular feature of the UEs' distribution in CFmMIMO-NOMA system to choose a proper number of clusters, such that the SSE is maximized. A reliable and precise approach to validate the optimal number of clusters $L$ is the silhouette score \cite{Geron2019}, which is the mean silhouette coefficient of all UEs. The silhouette coefficient of an UE is calculated as $\ds\frac{c-b}{{\rm{max}}(c,b)}$, where $b$ denotes the mean distance to other UEs in the same cluster (so-called the mean intra-cluster distance), and $c$ represents the mean distance to UEs of the next closest cluster which is the one that minimizes $b$, excluding the UE’s own cluster (so-called mean nearest-cluster distance). The value of the silhouette coefficient ranges from -1 to +1. A coefficient close to +1 means that the UE is well matched to its own cluster and far from other clusters. A coefficient close to 0 indicates that the UE is near a cluster boundary, whereas a coefficient close to -1 implies that the UE is assigned to the wrong cluster. Table \ref{table1} shows the silhouette score versus the number of clusters $L$. It is observed that the optimal number of clusters for this setting is $L^\star=5$. 

In what follows, we set $L=5$ to verify the performance analysis in Section \ref{sec:FPA} and to evaluate the performance of the proposed algorithms in Section \ref{sec:OPA}. 

\subsection{Numerical Results for the Performance Analysis} \label{sec:FPA}

\begin{figure}[t]
	\begin{minipage}{0.46\linewidth}
		\centering
		\includegraphics[width=\columnwidth,trim={0cm 0cm 0cm 0cm}]{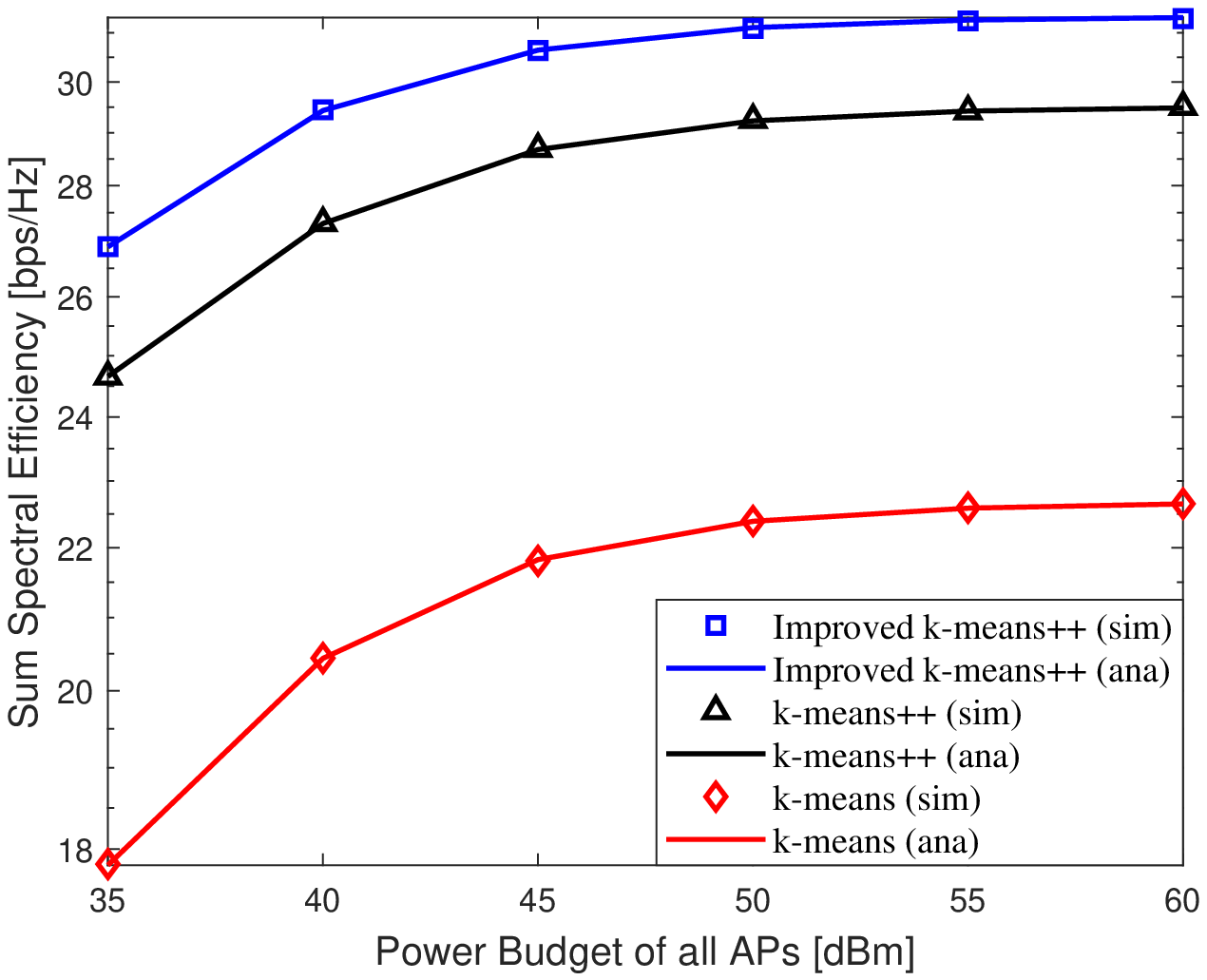}
		\vspace{-5pt}
		\caption{The SSE of CFmMIMO-NOMA  versus the total power budget of all APs for the k-means, k-means++, and  improved k-means++ algorithms.}
		\label{fig:FSSE2}
	\end{minipage}
	\hfill
	\begin{minipage}{0.46\linewidth}
		\centering
		\includegraphics[width=1.02\columnwidth,trim={0cm 0cm 0cm 0cm}]{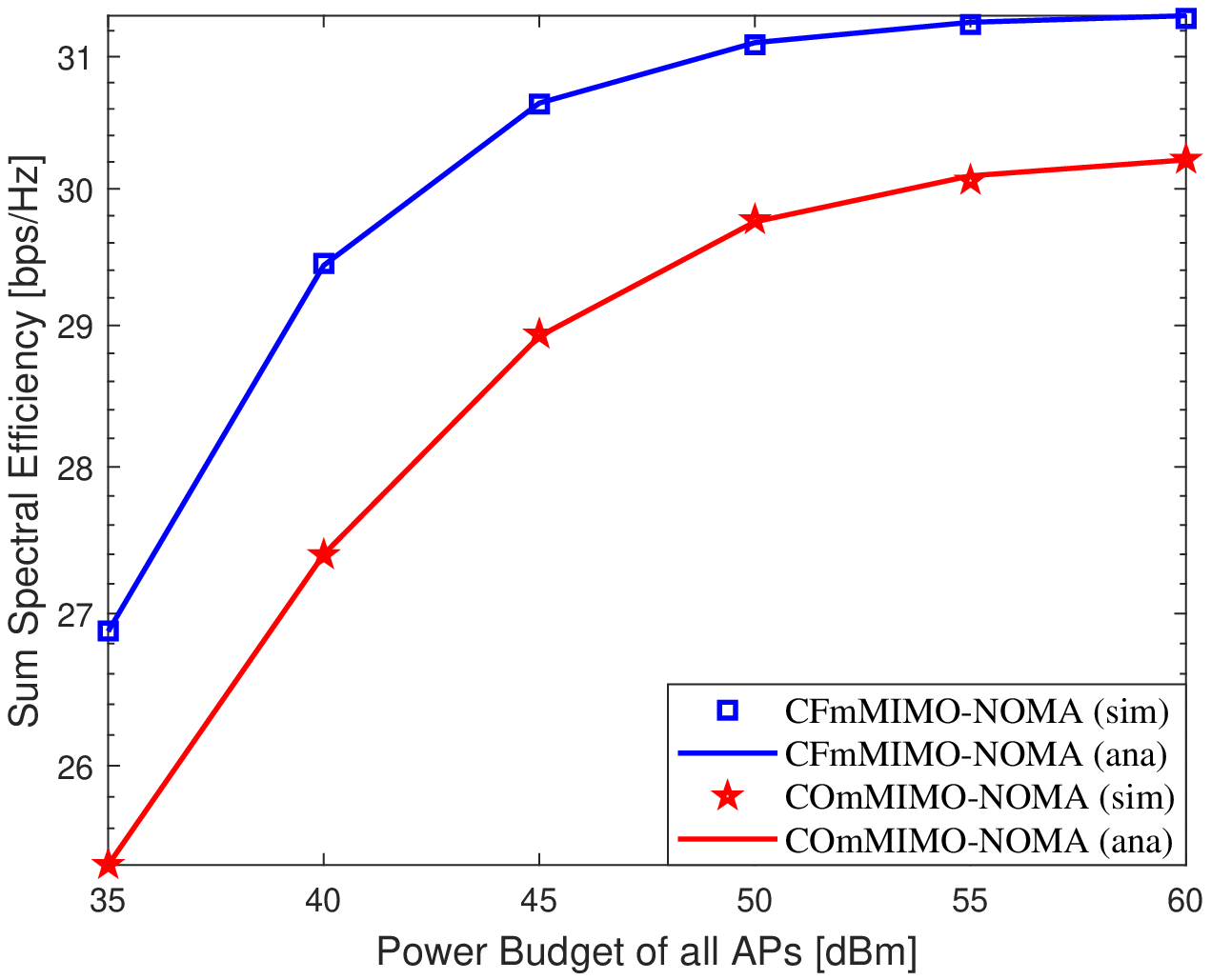}
		\vspace{-5pt}
		\caption{The SSE of CFmMIMO-NOMA and COmMIMO-NOMA versus the total power budget of all APs.}
		\label{fig:FSSE}
	\end{minipage}
\end{figure}

We now investigate the performance of the two proposed unsupervised ML-based UC algorithms with fixed PA. The transmit power at each AP allocated to a specific UE follows the fixed PA scheme. Each AP allocates equal power to each cluster, and then, the fractional transmit PA \cite{BenjebbourGC2014} is used to allocate the power to a specific UE in each cluster based on the virtual channel gains presented in Section \ref{sec:PA}. As a benchmark, we also consider the COmMIMO-NOMA system, which is presented in Section \ref{sec:co}. 

 Fig. \ref{fig:FSSE2} illustrates the SSE performance of CFmMIMO-NOMA versus the total power budget of all APs for the proposed UC algorithms. For comparison, the performance of the k-means algorithm is also plotted. It can be seen that the proposed UC algorithms significantly outperform the conventional k-means one. On the other hand, the improved k-means++ achieves the best SSE among all algorithms. This further confirms the importance of the effective  initialization of centroids that improves the quality of the grouping process; otherwise, the use of NOMA becomes less efficient.
 Next, the SSE performance of the CFmMIMO-NOMA and COmMIMO-NOMA systems using the improved k-means++ algorithm versus the total power budget of all APs is shown in Fig. \ref{fig:FSSE}. We can observe that the performance of the CFmMIMO-NOMA system is  better than that of COmMIMO-NOMA. This is attributed to the fact that CFmMIMO with many distributed APs brings the service antennas closer to UEs which not only reduces path losses but also
 provides  higher degree of macro-diversity, compared to COmMIMO. Further, from Figs. \ref{fig:FSSE2} and \ref{fig:FSSE}, simulation results are well matched with the derived closed-form expressions of SSE in Section \ref{sec:PA},  verifying the correctness of our analytical results. In the following numerical results, unless otherwise specified, the improved k-means++ algorithm is used for UC.

\subsection{Numerical Results for Optimal Power Allocation (Algorithm \ref{alg_IterativeAlgorithm})} \label{sec:OPA}
\begin{figure}[t]
	\centering
	\vspace{-10pt}
	\includegraphics[width=0.5\columnwidth,trim={0cm 0.0cm 0.0cm 0cm}]{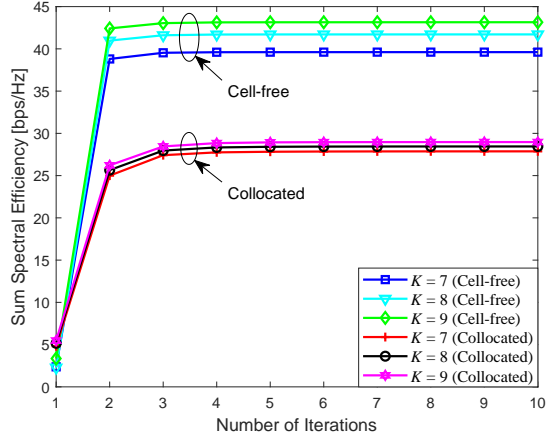}
	\vspace{-5pt}
	\caption{Convergence behavior of Algorithm \ref{alg_IterativeAlgorithm}  with different number of AP antennas, $K$.}
	\label{fig:OSSE}
\end{figure}  

In Fig. \ref{fig:OSSE}, we  evaluate the convergence speed of Algorithm \ref{alg_IterativeAlgorithm} for CFmMIMO-NOMA and COmMIMO-NOMA with different values of $K$.
The proposed algorithm converges within three iterations and the convergence speed of both systems is  not sensitive to
the number of AP antennas, $K$. As expected, the SSE is monotonically
increasing after each iteration. Compared to the results in Figs. \ref{fig:FSSE2} and \ref{fig:FSSE} with fixed PA at the power budget of 40 dBm,  Algorithm \ref{alg_IterativeAlgorithm} yields a significantly better performance in terms of SSE. The results demonstrate the effectiveness of the proposed algorithm to achieve the optimal SSE.

\begin{figure}[t]
	\begin{minipage}{0.46\linewidth}
		\centering
		\includegraphics[width=\columnwidth,trim={0cm 0cm 0cm 0cm}]{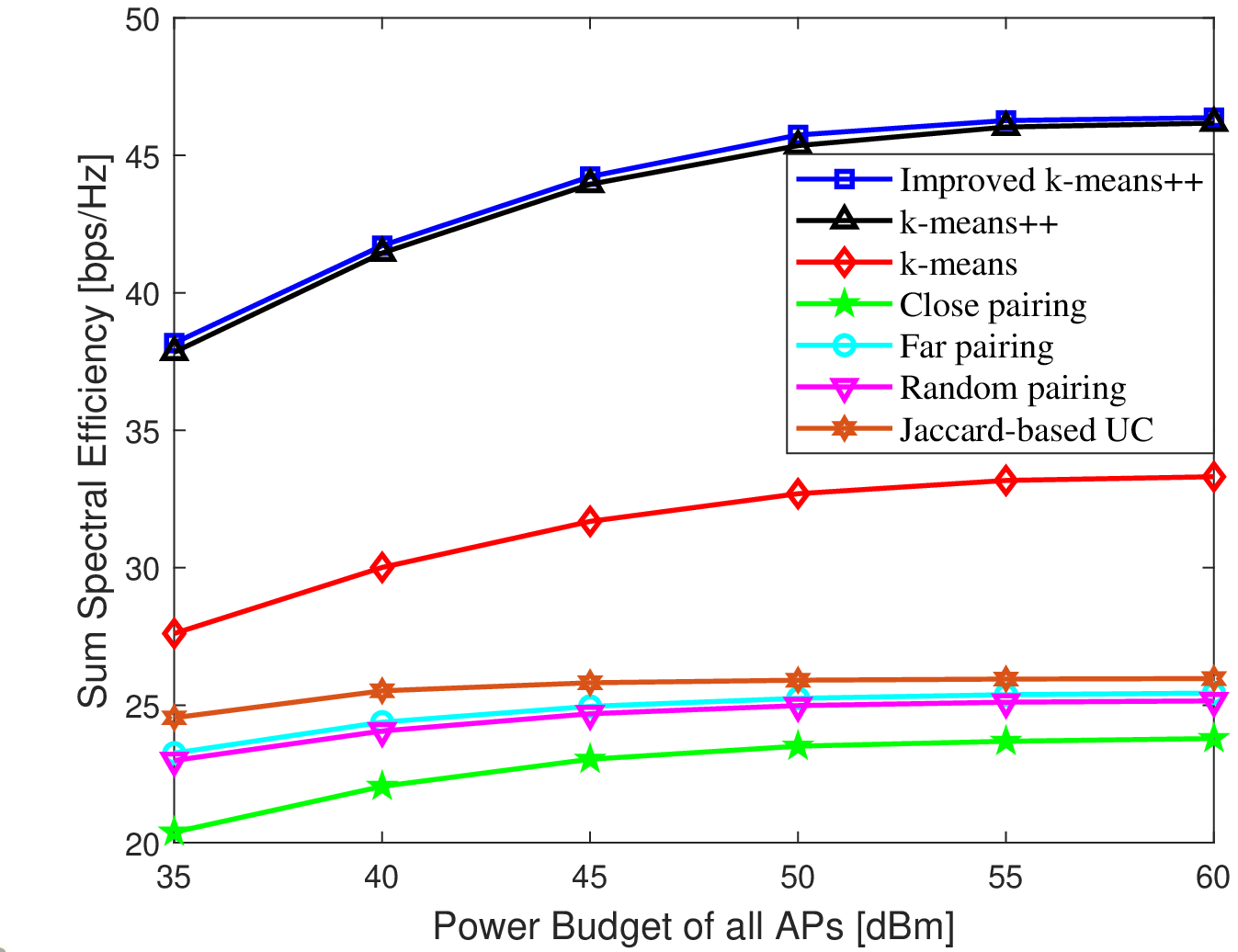}
		\vspace{-5pt}
		\caption{The SSE of different UC algorithms.}
		\label{fig:FSSE3}
	\end{minipage}
	\hfill
	\begin{minipage}{0.46\linewidth}
		\centering
		\includegraphics[width=1.01\columnwidth,trim={0cm 0cm 0cm 0cm}]{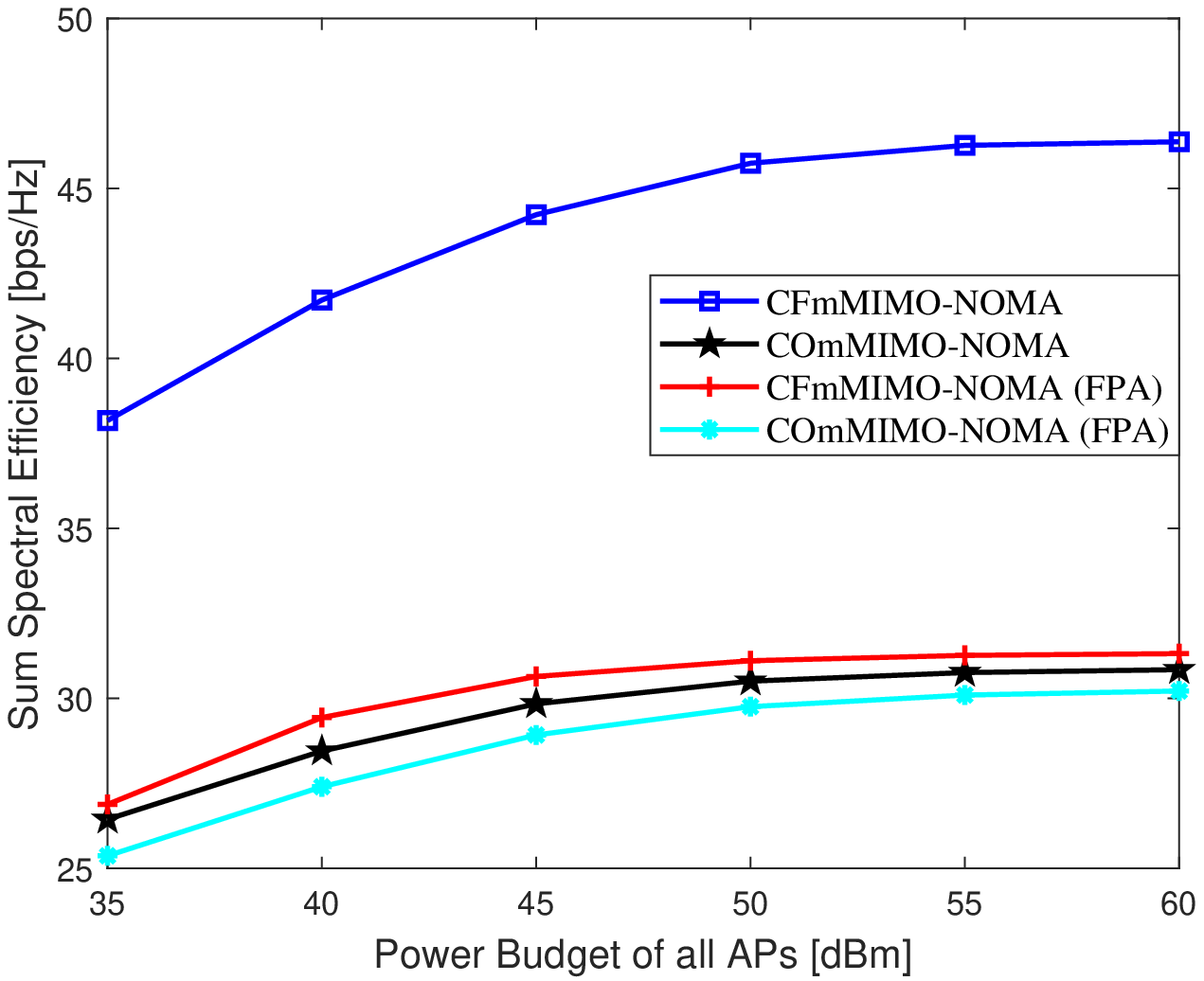}
		\vspace{-5pt}
		\caption{SSE of CFmMIMO-NOMA and COmMIMO-NOMA: with and without PA.}
		\label{fig:OSSE1}
	\end{minipage}
\end{figure}

Fig. \ref{fig:FSSE3} shows the impact of the proposed  k-means++ and improved k-means++ algorithms on the system performance of CFmMIMO-NOMA. For comparison, we also plot the SSE of the  k-means (i.e., Algorithm 1) and the recently proposed UC approaches, including near pairing, far pairing, random pairing \cite{BasharTCOM2020}, and the Jaccard-based UC \cite{RezaeiCL2020}. The main result observed from the figure is that   the proposed unsupervised ML-based UC algorithms 
achieve better SSE performance compared to the baseline ones, and the performance gaps are wider when $P_{\max}$ increases. This implies that the two proposed UC schemes are capable of exploiting UC more effectively, so that the SSE is remarkably enhanced. In Fig. \ref{fig:OSSE1}, we demonstrate the benefit of optimizing PA for CFmMIMO-NOMA and COmMIMO-NOMA systems. The SSE of both systems is significantly enhanced with optimal PA compared to the fixed PA scheme. Hence, this shows the necessity of optimizing PA for both systems, especially for the CFmMIMO-NOMA system.

\begin{figure}[t]
	\begin{minipage}{0.46\linewidth}
		\centering
		\includegraphics[width=1.0\columnwidth,trim={0cm 0cm 0cm 0cm}]{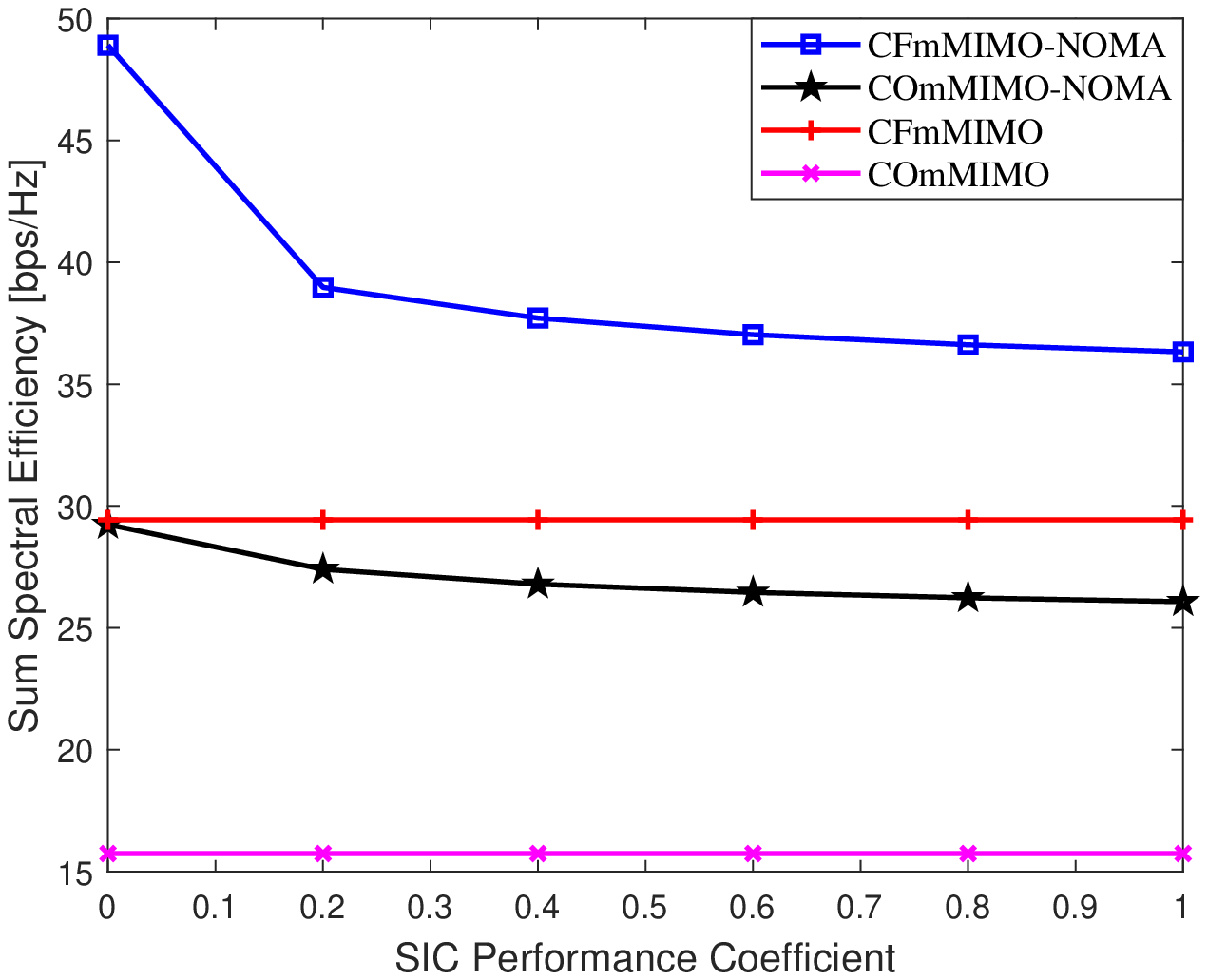}
		\vspace{-5pt}
		\caption{The effect of SIC performance coefficient on the SSE of CFmMIMO-NOMA and COmMIMO-NOMA systems.}
		\label{fig:OSSE2}
	\end{minipage}
	\hfill
	\begin{minipage}{0.46\linewidth}
		\centering
		\includegraphics[width=1.02\columnwidth,trim={0cm 0cm 0cm 0cm}]{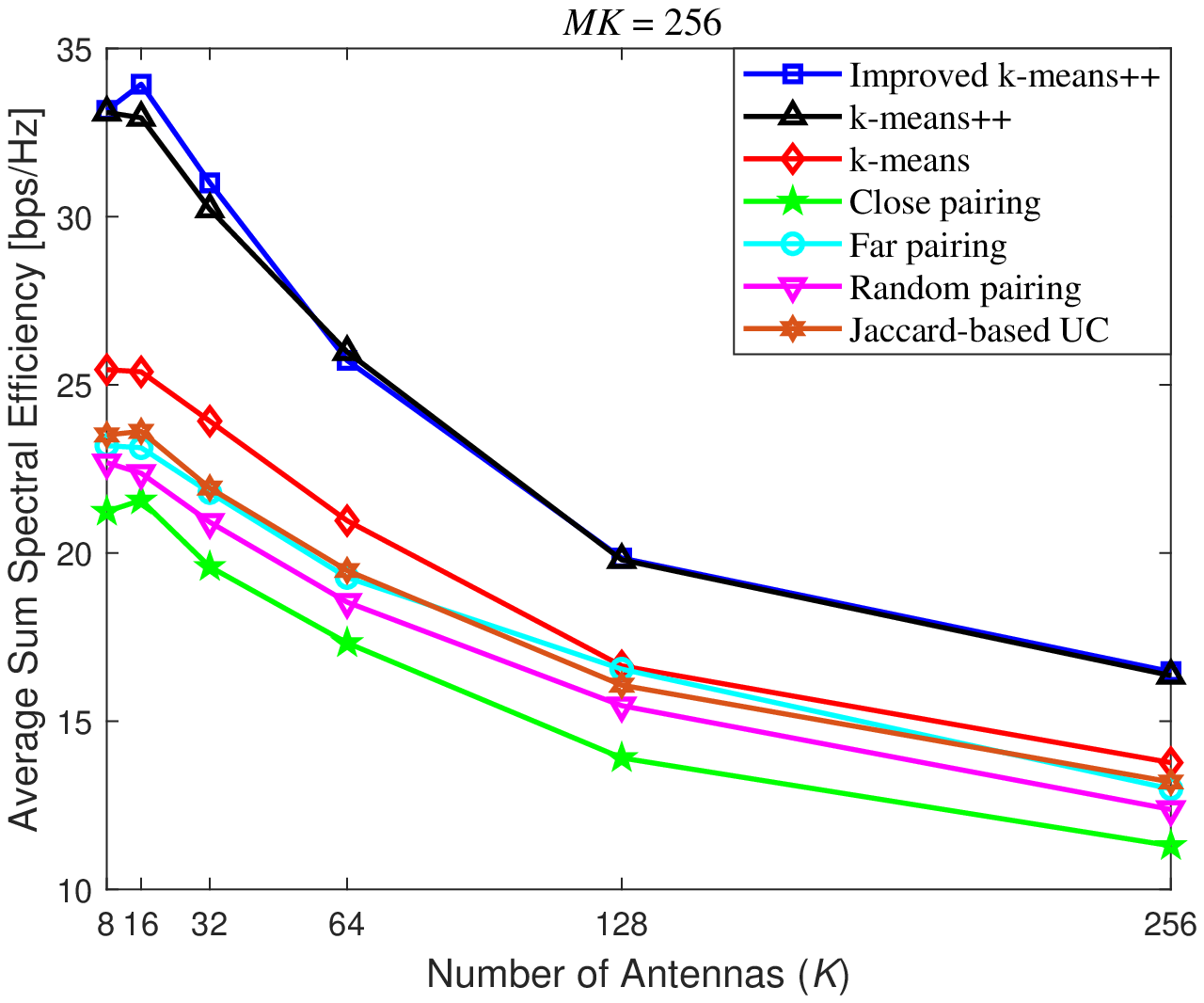}
		\vspace{-5pt}
		\caption{The joint effect of the numbers of antennas $K$ and  APs $M$ on the average SSE of different UC algorithms.}
		\label{fig:OSSE3}
	\end{minipage}
\end{figure}

Next, the effect of the SIC performance coefficient $\zeta_{{n}_{l}}$ on the SSE of CFmMIMO-NOMA and COmMIMO-NOMA is examined in Fig. \ref{fig:OSSE2}. We note that $\zeta_{{n}_{l}} = 1$ ($\zeta_{{n}_{l}} = 0$) indicates no SIC (perfect SIC), while $0 < \zeta_{{n}_{l}} < 1$ means imperfect SIC. The system performance without NOMA/SIC is plotted. It is clear that the SSE of CFmMIMO-NOMA degrades when $\zeta_{{n}_{l}},\forall n_l$ increases. It implies that the SIC performance coefficient requires to be small enough to exploit the full potential of NOMA in CFmMIMO.  Nevertheless, the SSE achieved by CFmMIMO-NOMA and COmMIMO-NOMA systems is much higher than their counterparts without NOMA/SIC.

Finally, we investigate the joint effect of the numbers of antennas $K$ and  APs $M$ on the average SSE of different UC algorithms. We fix $MK=256$ and select $K$ from the set $K\in[8,16,32,64,128,256]$. When $K=256$, then $M=1$, which represents COmMIMO-NOMA. From the figure, we see that the SSE first increases and then decreases when $K$ increases. This result reveals an interesting insight: for extremely small $K$, the use of fpZF is less efficient in terms of canceling  inter-cluster interference. However, the higher the value of $K$, the lower the value of APs $M$. This not only increases path losses, but also reduces the degree of macro-diversity. The results suggest that the optimal value of $(M,K)$ can improve the SSE of  CFmMIMO-NOMA, e.g., $(M,K)=(16,16)$ for improved k-means++ and $(M,K)=(32,8)$ for  k-means++ in this setting.

\section{Conclusion}\label{sec:con}
In this paper, we have investigated downlink CFmMIMO-NOMA system, where  two
efficient unsupervised ML-based UC algorithms are developed to effectively cluster users into disjoint clusters. Using the fpZF precoding  at APs, we have derived  closed-form expressions for the SSE of CFmMIMO-NOMA, taking into account  effects of intra-cluster pilot contamination, inter-cluster interference, and imperfect SIC. In addition, we
have considered the problem of power allocation to maximize SSE. Since  the formulated problem is intractable, we have developed a low-complexity iterative algorithm based on the IA  framework for its solution. Numerical results have confirmed the effectiveness of the proposed UC algorithms, and show their superior
performance  compared to the baseline schemes. The proposed PA algorithm converges fast, and  significantly outperforms CFmMIMO-NOMA without optimizing PA and
COmMIMO-NOMA in terms of SSE.

\appendices
\section{Proof of Lemma 1}\label{ap1}
\textit{1) Computation of} $|\text{DS}_{n_{l}}|^2$\textit{:} By using \eqref{eq:e} and \eqref{eq:hmnl3}, the numerator in \eqref{eq:SINR} is rewritten as
\begin{align}\label{eq:DS1}
|\text{DS}_{n_{l}}|^2 = \Bigl|\mathbb{E} \Bigl\{\sum\limits_{m\in\mathcal{M}} \sqrt{\rho^{m}_{n_{l}}} \textbf{h}_{m,n_{l}}^H \boldsymbol{w}_{m,l}\Bigl\}\Bigl|^2 
= \Bigl|\mathbb{E} \Bigl\{\sum\limits_{m\in\mathcal{M}} \sqrt{\rho^{m}_{n_{l}}} \hat{\textbf{h}}_{m,n_{l}}^H \boldsymbol{w}_{m,l}\Bigl\}\Bigl|^2 
= (K-\tau_p) \Bigl(\sum\limits_{m\in\mathcal{M}} \sqrt{\rho^{m}_{n_{l}} \gamma_{m,n_{l}}}\Bigl)^2,
\end{align}
where the second equality is obtained due to the independence between the estimation error vector $\textbf{e}_{m,n_{l}}$ and the channel estimate $\hat{\textbf{h}}_{m,n_{l}}$.

\textit{2) Computation of} $\mathbb{E} \left\{|\text{BU}_{n_{l}}|^2\right\}$\textit{:} The first term of the denominator in \eqref{eq:SINR} is reformulated as
\begin{align}\label{eq:BU1}
\mathbb{E} \left\{|\text{BU}_{n_{l}}|^2\right\} &= \mathbb{E} \Bigl\{\Bigl|\Bigl(\sum\limits_{m\in\mathcal{M}} \sqrt{\rho^{m}_{n_{l}}} \textbf{h}_{m,n_{l}}^H \boldsymbol{w}_{m,l} - \mathbb{E} \Bigl\{\sum\limits_{m\in\mathcal{M}} \sqrt{\rho^{m}_{n_{l}}} \textbf{h}_{m,n_{l}}^H \boldsymbol{w}_{m,l}\Bigl\}\Bigl)\Bigl|^2\Bigl\} \nonumber \\
& = \mathbb{E} \Bigl\{\Bigl|\sum\limits_{m\in\mathcal{M}} \sqrt{\rho^{m}_{n_{l}}} \textbf{h}_{m,n_{l}}^H \boldsymbol{w}_{m,l}\Bigl|^2\Bigl\} - \Bigl|\mathbb{E} \Bigl\{\sum\limits_{m\in\mathcal{M}} \sqrt{\rho^{m}_{n_{l}}} \textbf{h}_{m,n_{l}}^H \boldsymbol{w}_{m,l}\Bigl\}\Bigl|^2.
\end{align}

According to \eqref{eq:e} and \eqref{eq:hmnl3}, the first term in \eqref{eq:BU1} is further derived as follows:
\begin{align}\label{eq:BU2}
\mathbb{E} \Bigl\{\Bigl|\sum\limits_{m\in\mathcal{M}} \sqrt{\rho^{m}_{n_{l}}} \textbf{h}_{m,n_{l}}^H \boldsymbol{w}_{m,l}\Bigl|^2\Bigl\} &= \mathbb{E} \Bigl\{\Bigl|\sum\limits_{m\in\mathcal{M}} \sqrt{\rho^{m}_{n_{l}}} \hat{\textbf{h}}_{m,n_{l}}^H \boldsymbol{w}_{m,l}\Bigl|^2\Bigl\} + \mathbb{E} \Bigl\{\Bigl|\sum\limits_{m\in\mathcal{M}} \sqrt{\rho^{m}_{n_{l}}} \textbf{e}_{m,n_{l}}^H \boldsymbol{w}_{m,l}\Bigl|^2\Bigl\} \nonumber \\
&= (K-\tau_p) \Bigl(\sum\limits_{m\in\mathcal{M}} \sqrt{\rho^{m}_{n_{l}} \gamma_{m,n_{l}}}\Bigl)^2 + \sum\limits_{m\in\mathcal{M}} \rho^{m}_{n_{l}} \Bigl(\beta_{m,n_{l}}-\gamma_{m,n_{l}}\Bigl).
\end{align}

Substituting \eqref{eq:DS1} and \eqref{eq:BU2} into \eqref{eq:BU1}, \eqref{eq:BU1} can be rewritten as
\begin{align}\label{eq:BU3}
\mathbb{E} \Bigl\{|\text{BU}_{n_{l}}|^2\Bigl\} &= (K-\tau_p) \Bigl(\sum\limits_{m\in\mathcal{M}} \sqrt{\rho^{m}_{n_{l}} \gamma_{m,n_{l}}}\Bigl)^2 + \sum\limits_{m\in\mathcal{M}} \rho^{m}_{n_{l}} \left(\beta_{m,n_{l}}-\gamma_{m,n_{l}}\right) \nonumber \\
& - (K-\tau_p) \Bigl(\sum\limits_{m\in\mathcal{M}} \sqrt{\rho^{m}_{n_{l}} \gamma_{m,n_{l}}}\Bigl)^2  \nonumber \\
& = \sum\limits_{m\in\mathcal{M}} \rho^{m}_{n_{l}} \left(\beta_{m,n_{l}}-\gamma_{m,n_{l}}\right).
\end{align}

\textit{3) Computation of} $\sum\limits_{{n'}_{l}=1}^{n_{l} - 1} \mathbb{E} \left\{|\text{ICI}_{n_{l}}|^2\right\}$\textit{:} Based on \eqref{eq:BU2}, the second term of the denominator in \eqref{eq:SINR} is computed as
\begin{align}\label{eq:ICI1}
\sum\limits_{{n'}_{l}=1}^{n_{l} - 1} \mathbb{E} \left\{|\text{ICI}_{n_{l}}|^2\right\} &= \sum\limits_{{n'}_{l}=1}^{n_{l} - 1} \mathbb{E} \Bigl\{\Bigl|\sum\limits_{m\in\mathcal{M}} \sqrt{\rho^{m}_{{n'}_{l}}} \textbf{h}_{m,n_{l}}^H \boldsymbol{w}_{m,l}\Bigl|^2\Bigl\} \nonumber \\
& = \sum\limits_{{n'}_{l}=1}^{n_{l} - 1} (K-\tau_p) \Bigl(\sum\limits_{m\in\mathcal{M}} \sqrt{\rho^{m}_{{n'}_{l}} \gamma_{m,n_{l}}}\Bigl)^2 + \sum\limits_{{n'}_{l}=1}^{n_{l} - 1} \sum\limits_{m\in\mathcal{M}} \rho^{m}_{{n'}_{l}} \left(\beta_{m,n_{l}}-\gamma_{m,n_{l}}\right).
\end{align}

\textit{4) Computation of} $\sum\limits_{{n''}_{l} = n_{l} + 1}^{N_l} \mathbb{E} \left\{|\text{RICI}_{n_{l}}|^2\right\}$\textit{:} According to \eqref{eq:BU2}, the third term of the denominator in \eqref{eq:SINR} is rewritten as
\begin{align} \label{eq:RICI1}
\sum\limits_{{n''}_{l} = n_{l} + 1}^{N_l} \mathbb{E} \left\{|\text{RICI}_{n_{l}}|^2\right\} &= \sum\limits_{{n''}_{l} = n_{l} + 1}^{N_l} \mathbb{E} \Bigl\{\Bigl|\sqrt{\zeta_{{n}_{l}}} \sum\limits_{m\in\mathcal{M}} \sqrt{\rho^{m}_{{n''}_{l}}} \textbf{h}_{m,n_{l}}^H \boldsymbol{w}_{m,l} x_{{n''}_{l}} \Bigl|^2\Bigl\} \nonumber \\
& = \sum\limits_{{n''}_{l} = n_{l} + 1}^{N_l} \zeta_{{n}_{l}} (K-\tau_p) \Bigl(\sum\limits_{m\in\mathcal{M}} \sqrt{\rho^{m}_{{n''}_{l}} \gamma_{m,n_{l}}}\Bigl)^2 \nonumber \\
&+ \sum\limits_{{n''}_{l} = n_{l} + 1}^{N_l} \sum\limits_{m\in\mathcal{M}} \zeta_{{n}_{l}} \rho^{m}_{{n''}_{l}} \left(\beta_{m,n_{l}}-\gamma_{m,n_{l}}\right).
\end{align}

\textit{5) Computation of} $\mathbb{E} \left\{|\text{UI}_{n_{l}}|^2\right\}$\textit{:} From \eqref{eq:BU2}, the fourth term of the denominator in \eqref{eq:SINR} is shown as follows:
\begin{align} \label{eq:UI2}
\mathbb{E} \left\{|\text{UI}_{n_{l}}|^2\right\} &= \sum\limits_{{l'}\in\mathcal{L}\setminus \{l\}} \sum\limits_{n_{l'}=1}^{N_{l'}} \mathbb{E} \Bigl\{\Bigl|\sum\limits_{m\in\mathcal{M}} \sqrt{\rho^{m}_{{n}_{l'}}} \textbf{h}_{m,n_{l}}^H \boldsymbol{w}_{m,{l'}} \Bigl|^2\Bigl\} \nonumber \\
& = \sum\limits_{{l'}\in\mathcal{L}\setminus \{l\}} \sum\limits_{n_{l'}=1}^{N_{l'}} \mathbb{E} \Bigl\{\Bigl|\sum\limits_{m\in\mathcal{M}} \sqrt{\rho^{m}_{{n}_{l'}}} \textbf{e}_{m,n_{l}}^H \boldsymbol{w}_{m,{l'}} \Bigl|^2\Bigl\} \nonumber \\
& = \sum\limits_{{l'}\in\mathcal{L}\setminus \{l\}} \sum\limits_{n_{l'}=1}^{N_{l'}} \sum\limits_{m\in\mathcal{M}} \rho^{m}_{{n}_{l'}} \left(\beta_{m,n_{l}}-\gamma_{m,n_{l}}\right),
\end{align}
where the second equality in \eqref{eq:UI2} is obtained due to the property of the fpZF precoding.

Finally, by substituting \eqref{eq:DS1}, \eqref{eq:BU3}, \eqref{eq:ICI1},  \eqref{eq:RICI1}, and \eqref{eq:UI2} into \eqref{eq:SINR}, $\text{SINR}^{n_{l}}_{n_{l}}$ is obtained as in \eqref{eq:SINR1}. Following the similar steps for deriving $\text{SINR}^{n_{l}}_{n_{l}}$, $\text{SINR}^{n_{l}}_{{n'}_{l}}$ can be easily derived as in \eqref{eq:SINR2}.

\section{Proof of Proposition \ref{pro:1}}\label{app:B}
By contradiction and IA principles, we can easily prove that constraints \eqref{eq:op2c1b}{-}\eqref{eq:op2c1d},
\eqref{eq:op2c1eConvex}, 
\eqref{eq:op2d1b}{-}\eqref{eq:op2d1e} and \eqref{eq:eq:convexprogramCQP:d} must hold with equality at optimum. Let us define $\mathcal{F}(\varphi_{n_{l}})\triangleq \ln(1+\varphi_{n_{l}})$. From \eqref{eq:op2blinear}, we have
\begin{equation}
\mathcal{F}(\varphi_{n_{l}}) \geq \mathcal{F}^{(\kappa)}(\varphi^{(\kappa)},\bar{\varphi}_{n_{l}}),
\end{equation}
and 
\begin{equation}
\mathcal{F}(\varphi_{n_{l}}^{(\kappa)}) = \mathcal{F}^{(\kappa)}(\varphi^{(\kappa)},\bar{\varphi}_{n_{l}}).
\end{equation}

Thus, it is true that
\begin{equation}
\mathcal{F}(\varphi_{n_{l}}^{(\kappa)}) \geq \mathcal{F}^{(\kappa-1)}(\varphi^{(\kappa)},\bar{\varphi}_{n_{l}}) \geq \mathcal{F}^{(\kappa-1)}(\varphi^{(\kappa-1)},\bar{\varphi}_{n_{l}}) =\mathcal{F}(\varphi_{n_{l}}^{(\kappa-1)}). 
\end{equation}

These results imply that $(\boldsymbol{\varpi}^{(\kappa)},{\boldsymbol{\theta}}^{(\kappa)},\boldsymbol{\varphi}^{(\kappa)})$ is an improved solution to problem \eqref{eq:convexprogramCQP}, compared to $(\boldsymbol{\varpi}^{(\kappa-1)},$ ${\boldsymbol{\theta}}^{(\kappa-1)},\boldsymbol{\varphi}^{(\kappa-1)})$. By \cite[Theorem 1]{Marks:78}, the sequence $\{\boldsymbol{\varpi}^{(\kappa)},{\boldsymbol{\theta}}^{(\kappa)},\boldsymbol{\varphi}^{(\kappa)}\}$ converges to at least local optima which satisfy the KKT conditions. As a result, the objective value of problem \eqref{eq:convexprogramCQP} is monotonically
increasing, i.e., $\bigl(1-\frac{\tau_p}{\tau_c}\bigr) \sum\limits_{l \in \mathcal{L}} \sum\limits_{n_{l} \in \mathcal{N}_{l}} r_{n_{l}}^{(\kappa)} \geq \bigl(1-\frac{\tau_p}{\tau_c}\bigr) \sum\limits_{l \in \mathcal{L}} \sum\limits_{n_{l} \in \mathcal{N}_{l}} r_{n_{l}}^{(\kappa-1)}$. In addition, the sequence of the objective values is upper bounded  due
to power constraints \eqref{eq:eq:convexprogramlog:c}, which completes the proof.

\begingroup
\setstretch{1.3}
\bibliographystyle{IEEEtran}
\bibliography{IEEEfull}

% Generated by IEEEtran.bst, version: 1.14 (2015/08/26)
\begin{thebibliography}{10}
\providecommand{\url}[1]{#1}
\csname url@samestyle\endcsname
\providecommand{\newblock}{\relax}
\providecommand{\bibinfo}[2]{#2}
\providecommand{\BIBentrySTDinterwordspacing}{\spaceskip=0pt\relax}
\providecommand{\BIBentryALTinterwordstretchfactor}{4}
\providecommand{\BIBentryALTinterwordspacing}{\spaceskip=\fontdimen2\font plus
\BIBentryALTinterwordstretchfactor\fontdimen3\font minus
  \fontdimen4\font\relax}
\providecommand{\BIBforeignlanguage}[2]{{%
\expandafter\ifx\csname l@#1\endcsname\relax
\typeout{** WARNING: IEEEtran.bst: No hyphenation pattern has been}%
\typeout{** loaded for the language `#1'. Using the pattern for}%
\typeout{** the default language instead.}%
\else
\language=\csname l@#1\endcsname
\fi
#2}}
\providecommand{\BIBdecl}{\relax}
\BIBdecl

\bibitem{Iotanalytics2016}
\BIBentryALTinterwordspacing
Iot-analytics.com, \emph{State of the IoT 2018: Number of IoT devices now at 7B
  – Market accelerating}, Aug. 2018. [Online]. Available:
  \url{https://iot-analytics.com/state-of-the-iot-update-q1-q2-2018-number-of-iot-devices-now-7b}
\BIBentrySTDinterwordspacing

\bibitem{Cisco2017_CVNI}
\BIBentryALTinterwordspacing
\emph{Cisco Visual Networking Index: Global Mobile Data Traffic Forecast
  Update, 2016-2021}, Mar. 2017. [Online]. Available:
  \url{https://www.cisco.com/c/en/us/solutions/collateral/service-provider/visual-networking-index-vni/mobile-white-paper-c11-520862.html}
\BIBentrySTDinterwordspacing

\bibitem{Islam:COMSurTutor:2017}
S.~M.~R. Islam, N.~Avazov, O.~A. Dobre, and K.-S. Kwak, ``Power-domain
  non-orthogonal multiple access {(NOMA)} in {5G} systems: Potentials and
  challenges,'' \emph{IEEE Commun. Surveys Tuts.}, vol.~19, no.~2, pp.
  721--742, 2nd Quart. 2017.

\bibitem{Dinh:JSAC:Dec2017}
V.-D. Nguyen, H.~D. Tuan, T.~Q. Duong, H.~V. Poor, and O.-S. Shin, ``Precoder
  design for signal superposition in {MIMO-NOMA} multicell networks,''
  \emph{IEEE J. Select. Areas Commun.}, vol.~35, no.~12, pp. 2681--2695, Dec.
  2017.

\bibitem{HieuTCOM2019}
H.~V. {Nguyen}, V.-D. {Nguyen}, O.~A. {Dobre}, D.~N. {Nguyen}, E.~{Dutkiewicz},
  and O.-S. {Shin}, ``Joint power control and user association for {NOMA}-based
  full-duplex systems,'' \emph{IEEE Trans. Commun.}, vol.~67, no.~11, pp.
  8037--8055, Nov. 2019.

\bibitem{Ngo:TWC:Mar2017}
H.~Q. {Ngo}, A.~{Ashikhmin}, H.~{Yang}, E.~G. {Larsson}, and T.~L. {Marzetta},
  ``Cell-free massive {MIMO} versus small cells,'' \emph{IEEE Trans. Wireless
  Commun.}, vol.~16, no.~3, pp. 1834--1850, Mar. 2017.

\bibitem{Nayebi:IEEETWC:Jul2017}
E.~{Nayebi}, A.~{Ashikhmin}, T.~L. {Marzetta}, H.~{Yang}, and B.~D. {Rao},
  ``Precoding and power optimization in cell-free massive {MIMO} systems,''
  \emph{IEEE Trans. Wireless Commun.}, vol.~16, no.~7, pp. 4445--4459, July
  2017.

\bibitem{Bashar:IEEETWC:Apr2019}
M.~{Bashar}, K.~{Cumanan}, A.~G. {Burr}, M.~{Debbah}, and H.~Q. {Ngo}, ``On the
  uplink max–min {SINR} of cell-free massive {MIMO} systems,'' \emph{IEEE
  Trans. Wireless Commun.}, vol.~18, no.~4, pp. 2021--2036, Apr. 2019.

\bibitem{Ngo:IEEETGCN:Mar2018}
H.~Q. {Ngo}, L.-N. {Tran}, T.~Q. {Duong}, M.~{Matthaiou}, and E.~G. {Larsson},
  ``On the total energy efficiency of cell-free massive {MIMO},'' \emph{IEEE
  Trans. Green Commun. Netw.}, vol.~2, no.~1, pp. 25--39, Mar. 2018.

\bibitem{ChenTWC2018}
Z.~{Chen} and E.~{Björnson}, ``Channel hardening and favorable propagation in
  cell-free massive {MIMO} with stochastic geometry,'' \emph{IEEE Trans.
  Commun.}, vol.~66, no.~11, pp. 5205--5219, Nov. 2018.

\bibitem{chen2020massive}
\BIBentryALTinterwordspacing
X.~Chen, D.~W.~K. Ng, W.~Yu, E.~G. Larsson, N.~Al-Dhahir, and R.~Schober,
  ``Massive access for {5G} and beyond,'' Feb. 2020. [Online]. Available:
  \url{https://arxiv.org/abs/2002.03491}
\BIBentrySTDinterwordspacing

\bibitem{LiWCL2018}
Y.~{Li} and G.~A.~A. {Baduge}, ``{NOMA}-aided cell-free massive {MIMO}
  systems,'' \emph{IEEE Wireless Commun. Lett.}, vol.~7, no.~6, pp. 950--953,
  Dec. 2018.

\bibitem{ZhangICC2018}
Y.~{Zhang}, H.~{Cao}, M.~{Zhou}, and L.~{Yang}, ``Spectral efficiency
  maximization for uplink cell-free massive {MIMO-NOMA} networks,'' in
  \emph{2019 IEEE Inter. Conf. Commun. Works. (ICC Workshops)}, May 2019, pp.
  1--6.

\bibitem{RezaeiTWC2020}
F.~{Rezaei}, C.~{Tellambura}, A.~A. {Tadaion}, and A.~R. {Heidarpour}, ``Rate
  analysis of cell-free massive {MIMO-NOMA} with three linear precoders,''
  \emph{IEEE Trans. Commun.}, vol.~68, no.~6, pp. 3480--3494, June 2020.

\bibitem{Islam:IEEEWirelessComm:Apr2018}
S.~M.~R. Islam, M.~Zeng, O.~A. Dobre, and K.-S. Kwak, ``Resource allocation for
  downlink {NOMA} systems: Key techniques and open issues,'' \emph{IEEE
  Wireless Commun.}, vol.~25, no.~2, pp. 40--47, Apr. 2018.

\bibitem{BasharTCOM2020}
M.~{Bashar}, K.~{Cumanan}, A.~G. {Burr}, H.~Q. {Ngo}, L.~{Hanzo}, and
  P.~{Xiao}, ``On the performance of cell-free massive {MIMO} relying on
  adaptive {NOMA/OMA} mode-switching,'' \emph{IEEE Trans. Commun.}, vol.~68,
  no.~2, pp. 792--810, Feb. 2020.

\bibitem{RezaeiCL2020}
F.~{Rezaei}, A.~R. {Heidarpour}, C.~{Tellambura}, and A.~A. {Tadaion},
  ``Underlaid spectrum sharing for cell-free massive {MIMO-NOMA},'' \emph{IEEE
  Commun. Lett.}, vol.~24, no.~4, pp. 907--911, Apr. 2020.

\bibitem{HeTWC2017}
R.~{He}, Q.~{Li}, B.~{Ai}, Y.~L.-A. {Geng}, A.~F. {Molisch}, V.~{Kristem},
  Z.~{Zhong}, and J.~{Yu}, ``A kernel-power-density-based algorithm for channel
  multipath components clustering,'' \emph{IEEE Trans. Wireless Commun.},
  vol.~16, no.~11, pp. 7138--7151, Nov. 2017.

\bibitem{XieICC2017}
X.~{Xie}, Z.~{Zhang}, H.~{Jiang}, J.~{Dang}, and L.~{Wu}, ``Cluster-based
  geometrical dynamic stochastic model for {MIMO} scattering channels,'' in
  \emph{Proc. Inter. Conf. Wireless Commun. and Signal Process. (WCSP)}, Oct.
  2017, pp. 1--5.

\bibitem{WangACCESS2019}
Y.~{Wang}, A.~{Liu}, X.~{Xia}, and K.~{Xu}, ``Exploiting the clustered sparsity
  for channel estimation in hybrid analog-digital massive {MIMO} systems,''
  \emph{IEEE Access}, vol.~7, pp. 4989--5000, Dec. 2018.

\bibitem{RozarioICoTCT2018}
A.~B. {Rozario} and M.~F. {Hossain}, ``An architecture for {M}2{M}
  communications over cellular networks using clustering and hybrid
  {TDMA-NOMA},'' in \emph{Proc. Inter. Conf. Infor. and Commun. Tech.
  (ICoICT)}, May 2018, pp. 18--23.

\bibitem{i14}
A.~K. Jain, ``Data clustering: 50 years beyond k-means,'' \emph{Pattern
  Recognition Lett.}, vol.~31, no.~8, pp. 651--666, June 2010.

\bibitem{CabreraITNAC2018}
E.~{Cabrera} and R.~{Vesilo}, ``An enhanced k-means clustering algorithm with
  non-orthogonal multiple access {(NOMA)} for {MMC} networks,'' in \emph{Proc.
  Inter. Telecommun Net. and App. Conf. (ITNAC)}, Nov. 2018, pp. 1--8.

\bibitem{CuiTWC2018}
J.~{Cui}, Z.~{Ding}, P.~{Fan}, and N.~{Al-Dhahir}, ``Unsupervised machine
  learning-based user clustering in millimeter-wave-{NOMA} systems,''
  \emph{IEEE Trans. Wireless Commun.}, vol.~17, no.~11, pp. 7425--7440, Sep.
  2018.

\bibitem{PalouGC2018}
F.~{Riera-Palou}, G.~{Femenias}, A.~G. {Armada}, and A.~{Pérez-Neira},
  ``Clustered cell-free massive {MIMO},'' in \emph{Proc. IEEE Globecom
  Workshops (GC Wkshps)}, Dec. 2018, pp. 1--6.

\bibitem{CespedesTCOM2020}
M.~{Morales-Céspedes}, O.~A. {Dobre}, and A.~{García-Armada}, ``Semi-blind
  interference aligned {NOMA} for downlink {MU}-{MISO} systems,'' \emph{IEEE
  Trans. Commun.}, vol.~68, no.~3, pp. 1852--1865, Mar. 2020.

\bibitem{Marks:78}
B.~R. Marks and G.~P. Wright, ``A general inner approximation algorithm for
  nonconvex mathematical programs,'' \emph{Oper. Res.}, vol.~26, no.~4, pp.
  681--683, July-Aug. 1978.

\bibitem{InterdonatoGCSIP2018}
G.~{Interdonato}, M.~{Karlsson}, E.~{Bjornson}, and E.~G. {Larsson}, ``Downlink
  spectral efficiency of cell-free massive {MIMO} with full-pilot
  zero-forcing,'' in \emph{IEEE Global Conf. Signal and Infor. Process.
  (GlobalSIP)}, Nov. 2018, pp. 1003--1007.

\bibitem{HieuJSAC2020}
H.~V. {Nguyen}, V.-D. {Nguyen}, O.~A. {Dobre}, S.~K. {Sharma},
  S.~{Chatzinotas}, B.~{Ottersten}, and O.-S. {Shin}, ``On the spectral and
  energy efficiencies of full-duplex cell-free massive {MIMO},'' \emph{IEEE J.
  Select. Areas Commun.}, pp. 1--1, June 2020.

\bibitem{ArthurDA2007}
D.~{Arthur} and S.~{Vassilvitskii}, ``K-means++: The advantages of careful
  seeding,'' in \emph{Proc. Symp. Discrete Algorithms}, Jan. 2007, pp.
  1027--1035.

\bibitem{FrantiPR2019}
P.~{Fränti} and S.~{Sieranoja}, ``How much can k-means be improved by using
  better initialization and repeats?'' \emph{Pattern Recognit.}, vol.~93, pp.
  95--112, Sep. 2019.

\bibitem{BachemAI2016}
O.~{Bachem}, M.~{Lucic}, S.~H. {Hassani}, and A.~{Krause}, ``Approximate
  k-means++ in sublinear time,'' in \emph{Proc. 30th AAAI Conf. Artif.
  Intell.}, Feb. 2016, pp. 1459--1467.

\bibitem{Tulino2004}
A.~M. {Tulino} and S.~{Verdú}, ``Random matrix theory and wireless
  communications,'' \emph{Commun. Inf. Theory}, vol.~1, no.~1, pp. 1--182, June
  2004.

\bibitem{Beck:JGO:10}
A.~Beck, A.~Ben-Tal, and L.~Tetruashvili, ``A sequential parametric convex
  approximation method with applications to nonconvex truss topology design
  problems,'' \emph{J. Global Optim.}, vol.~47, no.~1, pp. 29--51, May 2010.

\bibitem{Sturm}
J.~F. Sturm, ``Using sedumi 1.02, a {MATLAB} toolbox for optimization over
  symmetric cones,'' \emph{Optimiz. Methods and Softw.}, vol. 11-12, pp.
  625--653, Sep. 1999.

\bibitem{MOSEK}
\BIBentryALTinterwordspacing
``I.~{MOSEK} aps,'' 2014. [Online]. Available: \url{http://www.mosek.com}
\BIBentrySTDinterwordspacing

\bibitem{TangVTC01}
A.~{Tang}, J.~{Sun}, and K.~{Gong}, ``Mobile propagation loss with a low base
  station antenna for {NLOS} street microcells in urban area,'' in \emph{Proc.
  IEEE Veh. Tech. Conf. (VTC Spring)}, May 2001, pp. 333--336.

\bibitem{Geron2019}
A.~{Geron}, ``Hands-on machine learning with scikit-learn and tensorflow:
  Concepts, tools, and techniques to build intelligent systems,''
  \emph{O'Reilly Media Inc.}, pp. 1--484, Sep. 2019.

\bibitem{BenjebbourGC2014}
A.~{Benjebbour}, A.~{Li}, Y.~{Kishiyama}, H.~{Jiang}, and T.~{Nakamura},
  ``System-level performance of downlink {NOMA} combined with {SUMIMO} for
  future {LTE} enhancements,'' in \emph{Proc. IEEE Globecom Workshops (GC
  Wkshps)}, Dec. 2014, pp. 706--710.

\end{thebibliography}
\endgroup

\end{document}